\documentclass[12pt]{article}
\usepackage{graphicx}
\usepackage{amsmath,amssymb}
\usepackage{amsmath}
\usepackage[latin1]{inputenc}

\newcommand{\newc}{\newcommand}
\newc{\beq}{\begin{equation}}
\newc{\eeq}{\end{equation}}
\newc{\bea}{\begin{array}}
\newc{\eea}{\end{array}}
\newcommand{\ben}{\begin{eqnarray}}
\newcommand{\een}{\end{eqnarray}}
\newc{\ra}{\rightarrow}
\newc{\bfx}{{\bf x}}
\newc{\bfV}{{\bf V}}
\newc{\cO}{{\cal O}}
\newc{\bfv}{{\bf v}}
\newc{\bfu}{{\bf u}}
\newc{\bfp}{{\bf p}}
\newc{\ve}{{\varepsilon}}
\newc{\Psibar}{\overline\Psi}
\newc{\w}{{\bf w}}
\newc{\E}{{\mathbf{E}}}
\newc{\EE}{{\mathcal E}}
\newc{\bfn}{{\mathbf\nabla}}
\newc{\la}{{\cal L}}
\newc{\tla}{{\tilde{\cal L}}}
\newc{\bp}{{\bf p}}
\newc{\ho}{\hookrightarrow }
\newc{\bP}{{\bf P}}
\newc{\pd}{{\partial}}
\newc{\piv}{{\partial_4}}
\newc{\pv}{{\partial_5}}
\newc{\bJ}{{\bf J}}
\newc{\bze}{{\mathbf 0}}
\newc{\bK}{{\bf K}}
\newc{\tphi}{{\tilde\phi}}
\newc{\tF}{{\tilde F}}
\newc{\tD}{{\tilde D}}
\newc{\tJ}{{\tilde J}}
\newc{\tj}{{\tilde j}}
\newc{\bD}{{\bf D}}
\newc{\tvphi}{{\tilde\varphi}}
\newc{\trho}{{\tilde\rho}}
\newc{\ttheta}{{\tilde\theta}}
\newc{\tpsi}{{\tilde\psi}}
\newc{\tu}{{\tilde u}}
\newc{\cD}{{\cal D}}
\newc{\tPhi}{{\tilde\Phi}}
\newc{\tPsi}{{\tilde\Psi}}
\newc{\tA}{{\tilde A}}
\newc{\talpha}{{\tilde\alpha}}
\newc{\tbeta}{{\tilde\beta}}
\newc{\bA}{{\mathbf A}}
\newc{\bB}{{\bf B}}
\newc{\br}{{\bf r}}
\newc{\sig}{{\mathbf\sigma}}
\newc{\eg}{{\rm e.g.\ }}
\newc{\ie}{{\rm i.e.\ }}
\newcommand{\bey}{\begin{eqnarray}}
\newcommand{\pslash}{\not{\hbox{\kern-2.3pt $p$}}}
\newcommand{\pdslash}{\not{\hbox{\kern-2pt $\partial$}}}
\newcommand{\eey}{\end{eqnarray}}
\newtheorem{theorem}{Theorem}
\newtheorem{proposition}{Proposition}
\newenvironment{proof}[1][Proof]{\noindent\textbf{#1.} }{\ \rule{0.5em}{0.5em}}
\newtheorem{lemma}{Lemma}
\newtheorem{definition}{Definition}

\begin{document}

\begin{titlepage}
\vskip 2cm
\begin{center}
{\Large  Information-theoretic structure for the Tsallis q-entropy in statistical physics
\footnote{{\tt matrindade@uneb.br}}}
 \vskip 10pt
{ Marco A. S. Trindade, \\}
\vskip 5pt
{\sl Colegiado de Física, Departamento de Ciências Exatas e da Terra, Universidade do Estado da Bahia\\
Rua Silveira Martins, Cabula, 41150000, Salvador, Bahia, Brazil\\}
\vskip 2pt
\end{center}

\begin{abstract}

In this work, we derive information-theoretic properties for a modified Tsallis entropy, hereinafter referred to as 
$q$-entropy. We introduce the notions of joint $q$-entropy, conditional $q$-entropy, relative $q$-entropy, conditional mutual $q$-information, and establish several inequalities analogous to those of classical information theory. Within the context of Markov chains, these results are employed to prove a version of the second law of thermodynamics. Furthermore, we investigate the maximum entropy method in this setting. Finally, we prove a Tsallis version of the Shannon-McMillan-Breiman theorem and discuss the implications of these results in nonextensive statistical physics.
\end{abstract}

\bigskip

{\it Keywords:} q-entropy; Tsallis entropy, information theory.

{\it PACS:} 05.20.-y; 89.70.Cf; 89.75.-k

\vskip 3pt

\end{titlepage}


\newpage

\setcounter{footnote}{0} \setcounter{page}{1} \setcounter{section}{0} %
\setcounter{subsection}{0} \setcounter{subsubsection}{0}

\section{Introduction}
Shannon entropy is a crucial concept in Information Theory \cite{Shannon}. It was introduced by Claude Shannon in 1948 and is a measure of average uncertainty in the random variable \cite{Cover}. As highlighted by Nielsen \cite{Nielsen} there is two complementary views: it how much information we acquired when know the value of random variable, on average, or alternatively it is a measure of uncertainty about the variable before we know its value. This definition is analogous to the statistical entropy in statistical mechanics, introduced by Boltzmann, in 1870 \cite{Huang, Landau, Pathria}.

A generalization of the Boltzmann-Gibbs entropy within the scenario of statistical mechanics was proposed by Tsallis in 1988 \cite{Tsallis}, motivated by the scaling properties observed in multifractal systems. Unlike the standard Boltzmann-Gibbs entropy, the Tsallis entropy is nonadditive, thereby violating the additivity property that constitutes one of the fundamental assumptions of Callen's third postulate of equilibrium thermodynamics. This generalized entropy is characterized by an entropic index $q$, which quantifies the degree of nonextensivity of the system. The classical Boltzmann-Gibbs entropy is recovered in the limiting case $q \rightarrow 1$, ensuring consistency with standard statistical mechanics.

Over the past decades, Tsallis statistics has been successfully applied to a broad class of physical systems in which long-range interactions, long-term memory effects, or multifractal phase-space structures play a relevant role \cite{Tsallis, T1}. Such systems frequently arise in statistical physics, plasma physics, turbulence, gravitational systems, and other complex systems \cite{Dani1, Dani2, Marco1, Marco2}, where the assumptions underlying Boltzmann-Gibbs statistics are not fully satisfied \cite{TG}. In this scenario, the nonadditive nature of Tsallis entropy provides a  theoretical scheme for describing anomalous scaling behaviors and generalized thermodynamic relations, which are not adequately captured by conventional entropic measures \cite{TG}.

On the other hand, information theory provides a set of statistical tools for the study of complex systems \cite{Lind}. The maximum entropy method, for instance, has been regarded as a foundation for the construction of a theory of complex systems \cite{Golan}. This leads to the question of whether Shannon's information theory admits a generalization based on Tsallis entropy and whether such a generalization can shed light on attempts to overcome limitations of information theory in the study of complex systems \cite{Varley}. From this perspective, theoretical information-theoretic properties of Tsallis entropies were derived by Furuichi, with the aim of constructing a non-additive information theory \cite{Furuichi}.

This work develops information-theoretic definitions and inequalities within the context of a modified Tsallis entropy, which can be regarded as a natural generalization of Shannon entropy. We introduce notions such as joint $q$-entropy, conditional $q$-entropy, relative $q$-entropy, and conditional mutual $q$-information, formulated in a manner distinct from the approach of Ref.~\cite{Furuichi}, and establish several inequalities analogous to those of classical information theory. In the context of Markov chains, the derived inequalities are employed to prove a version of the second law of thermodynamics. Furthermore, the applicability of these results to the maximum entropy method is investigated. Finally, we prove a $q$-version of the Shannon-McMillan-Breiman theorem.

The paper is organized as follows. Section 2 contains the basic definitions and inequalities. In Section 3 we perform a stochastic analysis and we derive a second law of thermodynamics in the context of Markov chain. In Section 4 we apply the maximum entropy method in this scenario. Section 6 is devoted to presenting a Tsallis version of the Shannon-McMillan-Breiman theorem. Section 7 is dedicated to conclusions and perspectives. Basic probabilistic concepts relevant to this study are present in Appendix A.

\section{Basic Results}
Let $X$ be a discrete random variable with alphabet $\chi$ and probability mass
function $\{p(x)\}_{x\in\chi}$. The Tsallis entropy \cite{Tsallis, T1, TG, Furuichi}
\begin{equation} \label{T1}
S_{q}(X) = - \sum_{x \in \chi} p^{q}(x) \ln_{q} p(x),
\end{equation}
where
\begin{equation}
\ln_{q}(x)=\frac{x^{1-q}-1}{1-q}  \  \  \ (for \ \ x>0, q \in \mathbb{R}),
\end{equation}
is not nonadditive:
\begin{equation}
S_{q}^{(A+B)}=S_{q}^{(A)}+S_{q}^{(B)}+\frac{(1-q)}{k}S_{q}^{(A)}+S_{q}^{(B)}
\end{equation}
In this work we consider the Tsallis $q$-entropy given by
\begin{equation} \label{T2}
 H_{q}(X) = - \sum_{x \in \chi} p(x) \ln_{q} p(x)
\end{equation}

Furuichi \cite{Furuichi} explores Tsallis entropy given by (\ref{T1}) in the context of information theory to construct a nonadditive information theory. Here, we derive properties of Tsallis $q$-entropy, given by (\ref{T2}), analogous to information theory and we obtain a $q$-version of second law of thermodynamics and a $q$- version of Shannon-McMillan-Breiman theorem. Therefore, we obtain several results about $q$-entropy to ensure the consistency of the formulation.

Similarly to Shannon entropy, we have the following result
\begin{proposition}
$H_{q} \geq 0$, for $q\neq 1$.
\end{proposition}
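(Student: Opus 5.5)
The plan is to prove the inequality term by term. $H_{q}(X)$ is a sum of the quantities $-p(x)\ln_{q}p(x)$ over $x\in\chi$, and $p(x)\geq 0$. So it is enough to show that $\ln_{q}p\leq 0$ for every $p\in(0,1]$ and every $q\neq 1$. This is the $q$-analogue of the fact that $\ln p\leq 0$ on $(0,1]$, which is what gives nonnegativity of the Shannon entropy.

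First, I restrict the sum to the support of $p$, adopting the usual convention $0\cdot\ln_{q}0=0$. For $q<2$ this convention agrees with the limit, since
\begin{equation*}
p\ln_{q}p=\frac{p^{2-q}-p}{1-q}\to 0 \quad \text{as } p\to 0^{+}.
\end{equation*}
For other values of $q$ it is simply taken as the definition, so that the sum runs over $x$ with $p(x)>0$.

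Next, I fix $p\in(0,1]$ and analyse the sign of
\begin{equation*}
\ln_{q}p=\frac{p^{1-q}-1}{1-q}
\end{equation*}
in two cases.
\begin{itemize}
\item If $q<1$, then $1-q>0$. Since $0<p\leq 1$, we get $p^{1-q}\leq 1$. The numerator is therefore $\leq 0$ and the denominator is $>0$, so $\ln_{q}p\leq 0$.
\item If $q>1$, then $1-q<0$ and $p^{1-q}\geq 1$. The numerator is $\geq 0$ and the denominator is $<0$, so again $\ln_{q}p\leq 0$.
\end{itemize}
In both cases the only input is that $t\mapsto p^{t}$ is nonincreasing in $t$ when $0<p\leq 1$.

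Multiplying by $-p(x)\leq 0$ gives $-p(x)\ln_{q}p(x)\geq 0$ for each $x$, and summing yields $H_{q}(X)\geq 0$. As a by-product, equality holds exactly when every term vanishes, that is, when $p(x)\in\{0,1\}$ for all $x$, so that $X$ is deterministic.

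There is no real obstacle here. The only point needing care is the treatment of the terms with $p(x)=0$ when $q\geq 2$, where $\ln_{q}p$ is not even defined at $p=0$ and the convention has to be stated explicitly.
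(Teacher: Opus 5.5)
Your proof is correct and matches the paper's argument: both show $\ln_q t\le 0$ on $(0,1]$ by splitting into the cases $q<1$ and $q>1$, then sum the nonnegative terms $-p(x)\ln_q p(x)$. Your explicit handling of terms with $p(x)=0$ is a point of care the paper's proof omits, since the convention $0\cdot\ln_q 0=0$ agrees with the limit only for $q<2$.
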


\begin{proof}
By definition, with $t\in(0,1]$, we have
\begin{eqnarray}
\ln_q t=\frac{t^{1-q}-1}{1-q}.
\end{eqnarray}
If $q<1$, then $1-q>0$ and, since $t\le 1$, one has $t^{1-q}\le 1$,
which implies $\ln_q t\le 0$.
If $q>1$, then $1-q<0$ and, again using $t\le 1$, one has
$t^{1-q}\ge 1$, which also yields $\ln_q t\le 0$.
Hence, for all $q\neq 1$ and all $t\in(0,1]$,
\begin{eqnarray}
\ln_q t\le 0,
\end{eqnarray}
with equality if and only if $t=1$. Since $p(x)\ge 0$ for all $x\in\chi$, it follows that
\begin{eqnarray}
-\,p(x)\ln_q p(x)\ge 0,
\end{eqnarray}
Therefore, every term in the sum defining $H_q(X)$ is non-negative,
and consequently $H_q(X)\ge 0$.
\end{proof}

\begin{definition}
The conditional $q$-entropy $H_{q}(Y|X)$ with $(X,Y)\sim p(x,y)$ is defined by
\begin{equation}
H_{q}(X,Y)= - \sum_{x \in \chi} \sum_{y \in \emph{y}}p(x,y) \ln_{q} p(y|x)
\end{equation}
\end{definition}
\begin{proposition}
If $(X,Y)\sim p(x,y)$ and $0\leq q<1$, we have the following inequality
\begin{equation}
H_{q}(X,Y) \geq H_{q}(X)+H_{q}(Y|X)
\end{equation}
\end{proposition}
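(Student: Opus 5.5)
The plan is to decompose the joint $q$-logarithm with the pseudo-additivity of $\ln_q$ and then fix the sign of the cross term. I read the $q$-entropies in their information-content form, i.e. each term is the $q$-logarithm of the surprisal $1/p$:
\begin{equation*}
H_q(X,Y)=\sum_{x,y}p(x,y)\ln_q\frac{1}{p(x,y)},\quad H_q(X)=\sum_x p(x)\ln_q\frac{1}{p(x)},\quad H_q(Y|X)=\sum_{x,y}p(x,y)\ln_q\frac{1}{p(y|x)}.
\end{equation*}
This reading is essential. It agrees with the Shannon form as $q\to1$, but for $q\neq1$ it differs from the literal $-\sum p\ln_q p$ of (\ref{T2}). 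With the literal form the same argument gives the opposite inequality $H_q(X,Y)\le H_q(X)+H_q(Y|X)$, so the stated ``$\ge$'' holds only under the surprisal convention, and I will state this convention explicitly at the start of the proof.

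The first step is the identity, valid for $a,b>0$,
\begin{equation*}
\ln_q(ab)=\ln_q a+\ln_q b+(1-q)\ln_q a\,\ln_q b .
\end{equation*}
It follows directly from $(ab)^{1-q}-1=(a^{1-q}-1)+(b^{1-q}-1)+(a^{1-q}-1)(b^{1-q}-1)$ after dividing by $1-q$. Next, for every pair $(x,y)$ with $p(x,y)>0$, I write $\frac{1}{p(x,y)}=\frac{1}{p(x)}\cdot\frac{1}{p(y|x)}$ and apply the identity with $a=1/p(x)$ and $b=1/p(y|x)$. Multiplying by $p(x,y)$ and summing over $(x,y)$, the first term gives $\sum_x p(x)\ln_q(1/p(x))=H_q(X)$ by marginalization, and the second gives $H_q(Y|X)$. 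Hence
\begin{equation*}
H_q(X,Y)=H_q(X)+H_q(Y|X)+(1-q)\sum_{x,y}p(x,y)\,\ln_q\frac{1}{p(x)}\,\ln_q\frac{1}{p(y|x)} .
\end{equation*}

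The remaining step, which is the only one needing care, is the sign of the cross term. For $t\ge1$ and $q<1$ we have $t^{1-q}\ge1$, so $\ln_q t\ge0$; this is the same monotonicity argument as in the proof of the nonnegativity proposition. Since $1/p(x)\ge1$ and $1/p(y|x)\ge1$, both logarithms are nonnegative, and $1-q>0$ for $0\le q<1$. The cross term is therefore $\ge0$, which yields $H_q(X,Y)\ge H_q(X)+H_q(Y|X)$.

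Equality holds exactly when every product $\ln_q(1/p(x))\ln_q(1/p(y|x))$ vanishes on the support, i.e. when $X$ or $Y$ given $X$ is deterministic. Pairs with $p(x,y)=0$ contribute nothing to any of the sums and are simply omitted.
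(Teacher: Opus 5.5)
You follow the same route as the paper (pseudo-additivity of $\ln_q$, then the sign of the cross term), and your sign analysis is correct. The error is in the paper's proof. With $a=p(x)$ and $b=p(y|x)$, the paper's expansion correctly produces $-(1-q)\sum_{x,y}p(x,y)\ln_q p(x)\ln_q p(y|x)$, but the next line writes it as $+(1-q)\sum\ldots$. Both $q$-logarithms are $\le 0$, so their product is $\ge 0$ and the correction term is $\le 0$ for $0\le q<1$. Hence, for the paper's $H_q$ of (\ref{T2}), the true relation is $H_q(X,Y)\le H_q(X)+H_q(Y|X)$, and the proposition as stated is false.

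You should add a strict example to back your claim that ``$\ge$'' fails, since deriving the reverse inequality does not by itself rule out equality. Take $q=0$, where $\ln_0 t=t-1$ and $H_0(X)=1-\sum_x p(x)^2$. For independent uniform bits, $H_0(X,Y)=3/4$, while $H_0(X)+H_0(Y|X)=1/2+1/2=1$.

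Consequently your argument is valid but proves a different proposition, and you should present it as a correction rather than a proof of the stated one. Your surprisal reading gives $\sum_x p(x)\ln_q(1/p(x))=\bigl(\sum_x p(x)^q-1\bigr)/(1-q)$. This is the standard Tsallis entropy $S_q$ of (\ref{T1}), not the paper's modified entropy. Equivalently, since $\ln_q(1/t)=-\ln_{2-q}t$, your $H_q$ (joint, marginal and conditional) is the paper's $H_{2-q}$. What you establish is therefore the paper's inequality at index $2-q\in(1,2]$, where $1-q$ changes sign.

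The clean corrected statement for the paper's own $H_q$ is: $H_q(X,Y)\le H_q(X)+H_q(Y|X)$ when $q<1$, and $H_q(X,Y)\ge H_q(X)+H_q(Y|X)$ when $q>1$. Your result is the $q\in(1,2]$ case rewritten. Within that framing, your pseudo-additivity identity, the marginalization step, the sign argument and the equality condition ($X$ deterministic, or $Y$ a function of $X$) are all correct.
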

\begin{proof}
\begin{eqnarray}
H_{q}(X,Y)&=& - \sum_{x \in \chi} \sum_{y \in \emph{y}}p(x,y) \ln_{q} p(x)p(y|x) \nonumber \\
          &=& - \sum_{x \in \chi} \sum_{y \in \emph{y}}p(x,y) \ln_{q} p(x)- \sum_{x \in \chi} \sum_{y \in \emph{y}}p(x,y) \ln_{q} p(y|x) \nonumber \\
          &-&(1-q)\sum_{x \in \chi} \sum_{y \in \emph{y}}p(x,y) \ln_{q} p(x) \ln_{q} p(y|x) \nonumber \\
          &=&H_{q}(X)+H_{q}(X \vert Y)+(1-q)\sum_{x \in \chi} \sum_{y \in \emph{y}}p(x,y) \ln_{q} p(x) \ln_{q} p(y|x) \nonumber
\end{eqnarray}
where we use property
\begin{equation} \label{somalnq}
\ln_{q}(xy)=\ln_{q}(x)+\ln_{q}(y)+(1-q)\ln_{q}(x)ln_{q}(y)
\end{equation}
Since last term is non-negative,
\begin{equation}
H_{q}(X,Y) \geq H_{q}(X)+H_{q}(Y|X)
\end{equation}
\end{proof}

\begin{proposition}
If $X$ and $Y$ are independent and $0 \leq q<1$, then
\begin{eqnarray}
H_{q}(X,Y) \geq H(X)+H(Y)
\end{eqnarray}
\end{proposition}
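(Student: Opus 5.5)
Here $H$ on the right-hand side is read as $H_q$, the only entropy defined so far; the statement is then $H_q(X,Y)\ge H_q(X)+H_q(Y)$. The plan is to reuse the preceding proposition and use independence to identify the conditional $q$-entropy with the marginal one.

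\textbf{Step 1.} Under independence, $p(x,y)=p(x)p(y)$, so $p(y|x)=p(y)$ whenever $p(x)>0$. Substituting into the definition of $H_q(Y|X)$ and summing over $x$ with $\sum_x p(x,y)=p(y)$ gives
\begin{equation}
H_q(Y|X)=-\sum_{x}\sum_{y}p(x)p(y)\ln_q p(y)=-\sum_y p(y)\ln_q p(y)=H_q(Y).
\end{equation}

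\textbf{Step 2.} The previous proposition, valid for $0\le q<1$, gives $H_q(X,Y)\ge H_q(X)+H_q(Y|X)$. Combining this with Step 1 yields the claim.

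\textbf{Step 3 (alternative, more explicit route).} One can also expand directly with the identity (\ref{somalnq}):
\begin{equation}
H_q(X,Y)=H_q(X)+H_q(Y)+(1-q)\sum_x p(x)\ln_q p(x)\sum_y p(y)\ln_q p(y)=H_q(X)+H_q(Y)+(1-q)H_q(X)H_q(Y).
\end{equation}
Here the cross term factorizes because of independence. It is non-negative, since $1-q>0$ and $H_q\ge 0$ by the first proposition. This gives the inequality together with the exact size of the gap.

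\textbf{Main point to check.} The only delicate step is the sign of the correction term. The product $\ln_q p(x)\ln_q p(y|x)$ is a product of two non-positive numbers (each argument lies in $(0,1]$), so it is non-negative, and $1-q\ge 0$ in the stated range. Terms with zero probability contribute nothing under the convention $0\ln_q 0=0$.
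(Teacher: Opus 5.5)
\textbf{The proposal has a genuine gap, and the statement as written is false.} The error is at the step you flagged as delicate: you dropped the overall minus sign in $H_q=-\sum p\ln_q p$. Apply the product identity for $\ln_q$ inside $H_q(X,Y)=-\sum_{x,y}p(x)p(y)\ln_q\bigl(p(x)p(y)\bigr)$. The cross term then enters with a minus sign:
\[
\begin{aligned}
H_q(X,Y)&=H_q(X)+H_q(Y)-(1-q)\sum_{x,y}p(x)p(y)\ln_q p(x)\ln_q p(y)\\
&=H_q(X)+H_q(Y)-(1-q)H_q(X)H_q(Y).
\end{aligned}
\]
You are right that $\ln_q p(x)\ln_q p(y)\ge 0$. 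But with the minus sign in front, the correction is non-positive for $0\le q<1$. So what actually follows is $H_q(X,Y)\le H_q(X)+H_q(Y)$, with strict inequality whenever neither variable is deterministic.

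A concrete counterexample: take $q=0$, so $\ln_0 t=t-1$ and $H_0=1-\sum p^2$. Two independent fair bits give $H_0(X)=H_0(Y)=1/2$ but $H_0(X,Y)=3/4<1$. No proof can close this. Equivalently, $H_q=S_{2-q}$ is the standard Tsallis entropy with index $2-q$, whose pseudo-additivity coefficient is $q-1<0$. The paper's own proof is exactly your Step 3 and contains the same sign slip: it writes $+(1-q)\sum p\ln_q p(x)\ln_q p(y)$.

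Your Step 1 ($H_q(Y|X)=H_q(Y)$ under independence) is correct, but Step 2 does not rescue the argument. The preceding proposition $H_q(X,Y)\ge H_q(X)+H_q(Y|X)$ has the same error; its exact form is
\[
H_q(X,Y)=H_q(X)+H_q(Y|X)-(1-q)\sum_{x,y} p(x,y)\ln_q p(x)\ln_q p(y|x).
\]
The same two fair bits give $H_0(X)+H_0(Y|X)=1>3/4$.

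Reading $H$ on the right as Shannon entropy does not help either. For $q<1$ one has $\ln_q t\ge\ln t$ on $(0,1]$ (from $e^u\ge 1+u$), hence $H_q(X,Y)\le H(X,Y)=H(X)+H(Y)$.

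Once the sign is fixed, your computation proves subadditivity $H_q(X,Y)\le H_q(X)+H_q(Y)$ for $0\le q<1$. The superadditive inequality you were asked to prove holds instead for $q>1$, where $-(1-q)H_q(X)H_q(Y)\ge 0$. For example, $q=2$ gives $H_2=|\mathrm{supp}|-1$, so two fair bits give $3\ge 1+1$.
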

\begin{proof}
We have
\begin{eqnarray}
H_{q}(X,Y)&=&H_{q}(X)+H_{q}(Y) \nonumber \\
          &=&H_{q}(X)+H_{q}(Y)+(1-q)\sum_{x \in \chi} \sum_{y \in \emph{y}}p(x,y) \ln_{q} p(x) \ln_{q} p(y), \nonumber \\
          &\geq& H_{q}(X)+H_{q}(Y)
\end{eqnarray}
since $0 \leq q<1$
\end{proof}

\begin{proposition}
For $0 \leq q <1$, we have
\begin{equation}
H_{q}(X,Y|Z) \geq H_{q}(X|Z)+H_{q}(Y|X,Z)
\end{equation}
\end{proposition}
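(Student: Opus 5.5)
The plan is to repeat the argument of Proposition 2 (the unconditional chain-rule inequality), now conditioned on $Z$ throughout. First I will fix the conditional quantities in the same way as the Definition above:
\begin{equation}
H_q(X,Y|Z) = -\sum_{x,y,z} p(x,y,z)\ln_q p(x,y|z), \qquad H_q(X|Z) = -\sum_{x,z} p(x,z)\ln_q p(x|z),
\end{equation}
\begin{equation}
H_q(Y|X,Z) = -\sum_{x,y,z} p(x,y,z)\ln_q p(y|x,z).
\end{equation}
The probabilistic input is the conditional chain rule $p(x,y|z)=p(x|z)\,p(y|x,z)$. With it, the joint conditional $q$-entropy becomes
\begin{equation}
H_q(X,Y|Z) = -\sum_{x,y,z} p(x,y,z)\ln_q\bigl(p(x|z)\,p(y|x,z)\bigr).
\end{equation}

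Next I apply the pseudo-additivity identity (\ref{somalnq}) with arguments $p(x|z)$ and $p(y|x,z)$. This splits the sum into three parts.
\begin{itemize}
\item The first part contains $\ln_q p(x|z)$. Since it does not depend on $y$, I sum it over $y$ using $\sum_y p(x,y,z)=p(x,z)$, which identifies it as $H_q(X|Z)$.
\item The second part contains $\ln_q p(y|x,z)$ and is $H_q(Y|X,Z)$ directly from its definition.
\item The third part is the cross term
\begin{equation}
(1-q)\sum_{x,y,z} p(x,y,z)\,\ln_q p(x|z)\,\ln_q p(y|x,z).
\end{equation}
\end{itemize}
This reproduces, line by line, the three-term decomposition in the proof of Proposition 2, with every probability conditioned on $z$. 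Zero-probability triples are dropped from all sums, as usual.

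It remains to control the sign of the cross term, and this is where the hypothesis $0\le q<1$ enters.
\begin{itemize}
\item Every conditional probability lies in $(0,1]$. By the argument in the proof of Proposition 1, $\ln_q t\le 0$ on $(0,1]$, so $\ln_q p(x|z)\le 0$ and $\ln_q p(y|x,z)\le 0$.
\item Hence their product is non-negative.
\item The weights $p(x,y,z)$ are non-negative, and $1-q>0$ for $0\le q<1$.
\end{itemize}
So the cross term is a non-negative quantity. Treating it exactly as in the last step of Proposition 2, that is, as a non-negative addition to $H_q(X|Z)+H_q(Y|X,Z)$, gives $H_q(X,Y|Z)\ge H_q(X|Z)+H_q(Y|X,Z)$.

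The main obstacle is this last sign bookkeeping. The minus sign in front of the entropy sum multiplies the $(1-q)$ term coming from (\ref{somalnq}). The direction of the inequality is determined entirely by how that sign is carried, relative to the convention used in the displayed proof of Proposition 2. I will track it explicitly and align it with that proof, so that the conditional statement is obtained from the same mechanism as Propositions 2 and 3.

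Before writing the proof up, I will substitute back into the conditional chain-rule expansion and recheck the sign. If the minus sign in front of the entropy sum is carried into the cross term, that term enters as $-(1-q)\times(\text{non-negative})$ rather than as a positive addition. This is the very point flagged above as the main obstacle, and it could force the opposite inequality, so it must be resolved before the write-up.
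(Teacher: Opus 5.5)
Your three-term decomposition is correct. The doubt you raise at the end is not a detail to be ``aligned'' with the paper: it decides the matter against the statement. Carrying the minus sign through the pseudo-additivity identity (\ref{somalnq}) gives the exact identity
\[
H_q(X,Y|Z)=H_q(X|Z)+H_q(Y|X,Z)-(1-q)\sum_{x,y,z}p(x,y,z)\,\ln_q p(x|z)\,\ln_q p(y|x,z).
\]
You have already shown that for $0\le q<1$ the subtracted quantity is non-negative: both $q$-logarithms are $\le 0$ on $(0,1]$ and $1-q>0$. So your argument proves $H_q(X,Y|Z)\le H_q(X|Z)+H_q(Y|X,Z)$, the reverse of the claim.

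The step where you treat the cross term ``as a non-negative addition, as in Proposition 2'' has no justification. In the displayed proof of Proposition 2, the term $-(1-q)\sum\cdots$ silently becomes $+(1-q)\sum\cdots$ between consecutive lines. The paper's proof of the present proposition uses your decomposition and keeps the minus sign. It then asserts $(1-q)\sum p(x,y,z)\ln_q p(y|x,z)\ln_q p(x|z)\le 0$, which contradicts the sign fact you correctly established. Following the paper's convention would only import that error.

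No repair is possible, because the statement is false for every $q\in[0,1)$. Take $Z$ constant (or independent of $(X,Y)$) and $X,Y$ independent uniform bits, and put $b=2^{-(1-q)}\in[\tfrac12,1)$. Then
\[
H_q(X,Y|Z)=-\ln_q\tfrac14=\frac{1-b^2}{1-q},\qquad H_q(X|Z)+H_q(Y|X,Z)=-2\ln_q\tfrac12=\frac{2(1-b)}{1-q}.
\]
The difference is $\frac{(1-b)^2}{1-q}=(1-q)\bigl(\ln_q\tfrac12\bigr)^2>0$, exactly the cross term; for $q=0$ this reads $\tfrac34<1$.

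In general the claimed inequality holds only when the cross term vanishes, and then with equality. That happens exactly when $p(x|z)=1$ or $p(y|x,z)=1$ for every triple with $p(x,y,z)>0$. What your computation actually establishes, and what should be written up, is the identity above together with the reversed inequality $H_q(X,Y|Z)\le H_q(X|Z)+H_q(Y|X,Z)$ for $0\le q<1$.
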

\begin{proof}
\begin{eqnarray}
H_{q}(X,Y|Z)&=&-\sum_{x \in \chi} \sum_{y \in \emph{y}} \sum_{z \in \emph{Z}} p(x,y,z) \ln_{q}p(x,y|z) \nonumber \\
            &=&-\sum_{x \in \chi} \sum_{y \in \emph{y}} \sum_{z \in \emph{Z}} p(x,y,z) \ln_{q}p(y|x,z)p(x|z) \nonumber \\
            &=&-\sum_{x \in \chi} \sum_{y \in \emph{y}} \sum_{z \in \emph{Z}} p(x,y,z) \ln_{q}p(x|z) - \sum_{x \in \chi} \sum_{y \in \emph{y}} \sum_{z \in \emph{Z}} p(x,y,z) \ln_{q}p(y|x,z) \nonumber \\
            &-&(1-q)\sum_{x \in \chi} \sum_{y \in \emph{y}} \sum_{z \in \emph{Z}} p(x,y,z) \ln_{q}p(y|x,z) \ln_{q}p(x|z) \nonumber \\
            &\geq& H_{q}(X|Z)+H_{q}(Y|X,Z),
\end{eqnarray}
since
\begin{equation}
(1-q)\sum_{x \in \chi} \sum_{y \in \emph{y}} \sum_{z \in \emph{Z}} p(x,y,z) \ln_{q}p(y|x,z) \ln_{q}p(x|z) \leq 0
\end{equation}
and using the equation (\ref{somalnq})
\end{proof}

The following Lemma will be important in our stochastic analysis, for the entropy rate
\begin{lemma}\label{lemb}
Let $X_{1}, X_{2},..., X_{n}$ random variables with joint distribution $p(x_{1},x_{2},...,x_{n})$ and $0 \leq q<1$. Then
\begin{equation}
H_{q}(X_{1}, X_{2},..., X_{n}) \geq \sum_{i=1}^{n} H_{q}(X_{i}|X_{i-1},...,X_{1})
\end{equation}
\end{lemma}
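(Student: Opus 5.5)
The plan is an induction on $n$ that uses only the two-variable chain inequalities already established: the Proposition $H_q(X,Y)\geq H_q(X)+H_q(Y|X)$ and its conditional form $H_q(X,Y|Z)\geq H_q(X|Z)+H_q(Y|X,Z)$, both for $0\le q<1$. For $n=1$ the statement is an equality, since $H_q(X_1)$ is the only summand. For $n=2$ it is exactly the first of these Propositions with $X=X_1$ and $Y=X_2$.

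For the inductive step, assume the inequality for $n-1$ variables. Regard $(X_1,\dots,X_{n-1})$ as a single random variable $X$ on the product alphabet, and set $Y=X_n$. The two-variable Proposition then gives
\begin{equation}
H_q(X_1,\dots,X_n)\geq H_q(X_1,\dots,X_{n-1})+H_q(X_n|X_{n-1},\dots,X_1).
\end{equation}
The conditional $q$-entropy obtained this way is literally the last summand of the claimed bound, because the definition of $H_q(Y|X)$ only uses $p(y|x)$ with $x$ ranging over the joint alphabet. Applying the induction hypothesis to $H_q(X_1,\dots,X_{n-1})$ and adding the two inequalities yields the statement.

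Alternatively, one could peel off the first variable instead, using the conditional Proposition with $Z$ taken as the earlier block of variables. Both routes rest on the same pointwise ingredient: the identity (\ref{somalnq}) applied to
\begin{equation}
p(x_1,\dots,x_n)=p(x_1,\dots,x_{n-1})\,p(x_n|x_{n-1},\dots,x_1),
\end{equation}
together with a sign for the cross term $(1-q)\ln_q(\cdot)\ln_q(\cdot)$.

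The main obstacle is precisely that sign, and I do not think it can be resolved in favour of the statement. For $t\in(0,1]$ we have $\ln_q t\le 0$, so the product $\ln_q p(x)\ln_q p(y|x)$ is $\ge 0$. With $1-q>0$, (\ref{somalnq}) therefore gives $-\ln_q(xy)\le-\ln_q x-\ln_q y$, which is the reverse of what the earlier Propositions assert. A direct check confirms this. Take $q=0$, so that $H_0(X)=1-\sum_x p(x)^2$, and let $X_1,X_2$ be independent fair bits. Then
\begin{equation}
H_0(X_1,X_2)=\tfrac34<1=H_0(X_1)+H_0(X_2|X_1).
\end{equation}
So the lemma as worded, and the earlier Propositions it would be derived from, fail for $0\le q<1$. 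The induction above is only valid conditional on those Propositions. The same induction, run with the corrected sign, proves $H_q(X_1,\dots,X_n)\le\sum_{i=1}^n H_q(X_i|X_{i-1},\dots,X_1)$ for $0\le q<1$. The stated $\geq$ holds instead for $q>1$, where $1-q<0$ reverses the cross term.
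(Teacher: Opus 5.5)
Your refutation is correct. The statement fails for every $q\in[0,1)$, and your counterexample checks out: $H_0(X_1,X_2)=3/4$, whereas $H_0(X_1)=H_0(X_2|X_1)=1/2$. Expanding with (\ref{somalnq}) gives the exact identity
\[
H_q(X,Y)=H_q(X)+H_q(Y|X)-(1-q)\sum_{x,y}p(x,y)\,\ln_q p(x)\,\ln_q p(y|x).
\]
The subtracted sum is strictly positive as soon as some $(x,y)$ with $p(x,y)>0$ has $p(x)<1$ and $p(y|x)<1$. So the failure is generic, not an edge case. Your corrected version is the right statement: $\le$ for $q<1$, equality at $q=1$, and $\ge$ for $q>1$. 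Your induction proves it once the two-variable step is taken with the correct sign.

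The paper does not argue by induction. It writes $H_q(X_1,\dots,X_n)=-\sum p\,\ln_q\prod_i p_i$ with $p_i=p(x_i|x_{i-1},\dots,x_1)$, and passes in one step, with a $\ge$ sign, to $-\sum p\sum_i\ln_q p_i$. That step is exactly the sign error you located. Apply (\ref{somalnq}) successively to $\bigl(\prod_{j<i}p_j\bigr)\cdot p_i$. Each discarded cross term is $(1-q)\ln_q\bigl(\prod_{j<i}p_j\bigr)\ln_q p_i\ge 0$, since both arguments lie in $(0,1]$. Hence the valid pointwise inequality is $-\ln_q\prod_i p_i\le-\sum_i\ln_q p_i$.

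The same slip appears in two earlier places. In the two-variable Proposition, the cross term changes from $-$ to $+$ between consecutive lines. In the conditional Proposition, the cross term is declared $\le 0$. Meanwhile, the paper itself uses the correct direction, $-\ln_q\prod_i p_i\le\sum_i Z_i$, in the proof of Lemma \ref{lem3}. The subsequent theorem $H'_q(\chi)\le H_q(\chi)$ is obtained from this lemma by dividing by $n$, so it inherits the error.
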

\begin{proof}
We have
\begin{eqnarray}
H_{q}(X_{1}, X_{2},..., X_{n})&=&-\sum_{x_{1},x_{2},...,x_{n}}p(x_{1},x_{2},...,x_{n}) \ln_{q} [p(x_{1},x_{2},...,x_{n})] \nonumber \\
&=&-\sum_{x_{1},x_{2},...,x_{n}}p(x_{1},x_{2},...,x_{n}) \ln_{q} \left[\prod_{i=1}^{n}p(x_{i}|x_{i-1},...,x_{1})\right] \nonumber \\
&\geq&  - \sum_{x_{1},x_{2},...,x_{n}} \sum_{i=1}^{n} p(x_{1},x_{2},...,x_{n}) \ln_{q} [p(x_{i}|x_{i-1},...,x_{1})] \nonumber \\
&=& -\sum_{i=1}^{n} \sum_{x_{1},x_{2},...,x_{n}} p(x_{1},x_{2},...,x_{n}) \ln_{q} [p(x_{i}|x_{i-1},...,x_{1})] \nonumber \\
&=& -\sum_{i=1}^{n} H_{q}(X_{i}|X_{i-1},...,X_{1})
\end{eqnarray}
\end{proof}

The following definitions are similar to those concerning the Shannon entropy and they are fundamental in our formulation.
\begin{definition}
The relative $q$-entropy between two probability functions $p(x)$ and $r(x)$ is defined by
\begin{equation}
D_{q}(p||r)=\sum_{x \in \chi}p(x) \ln_{q} \left[\frac{p(x)}{r(x)}\right]
\end{equation}
\end{definition}

\begin{definition}
The mutual $q$-information $I_{q}(X;Y)$ is defined as the relative entropy between the joint distribution $p(x,y)$ and the product distribution $p(x)p(y)$:
\begin{equation}
I_{q}(X;Y)=\sum_{x \in \chi} \sum_{y \in \emph{y}}p(x,y)  \ln_{q} \left[\frac{p(x,y)}{p(x)p(y)}\right]
\end{equation}
where $p(x)$ and $p(y)$ are the marginal probability functions.
\end{definition}

\begin{definition}
The conditional mutual $q$-information of random variables $X$, $Y$ and $Z$ is defined as
\begin{equation}
I_{q}(X;Y|Z)=\sum_{x,y,z}p(x,y,z) \ln_{q} \left[\frac{p(x,y|z)}{p(x|z)p(y|z)}\right]
\end{equation}
\end{definition}

We have the chain rule for $q$-information

\begin{proposition}
\begin{eqnarray}
I_{q}(X_{1},X_{2},...,X_{n},Y)&=&\sum_{i=1}^{n}I_{q}(X_{i};Y|X_{i-1}, X_{i-2},...X_{1}) \nonumber \\
                              &=&(1-q)\sum_{x_{1},x_{2},...,x_{n},y}p(x_{1},x_{2},...,x_{n},y) \nonumber \\
&\times& \sum_{i=1}^{n} ( \ln_{q} \left[\frac{p(x_{i};y|x_{i-1}, x_{i-2},...x_{1})}{p(x_{i}|x_{i-1}, x_{i-2},...x_{1})p(y|x_{i-1}, x_{i-2},...x_{1})}\right] \nonumber \\
& \times & \ln_{q} \left [\prod_{i'=i+1}^{n}\left[\frac{p(x_{i'};y|x_{i'-1}, x_{i'-2},...x_{1})}{p(x_{i'}|x_{i'-1}, x_{i'-2},...x_{1})p(y|x_{i'-1}, x_{i'-2},...x_{1})}\right]\right])   \nonumber
\end{eqnarray}
\end{proposition}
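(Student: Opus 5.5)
The plan is to reduce the chain rule to an iterated version of the pseudo-additivity (\ref{somalnq}) of $\ln_q$. I read the displayed statement as asserting that $I_q(X_1,\dots,X_n;Y)$ equals $\sum_{i=1}^n I_q(X_i;Y|X_{i-1},\dots,X_1)$ \emph{plus} the $(1-q)$-weighted correction term. The second ``$=$'' in the display should be read as the continuation of a single sum; this is the only reading consistent with the case $q\to1$, where the correction vanishes and the classical chain rule is recovered. Throughout I abbreviate $x^{i-1}=(x_{i-1},\dots,x_1)$ and set
\begin{equation}
a_i=\frac{p(x_i,y|x^{i-1})}{p(x_i|x^{i-1})\,p(y|x^{i-1})}=\frac{p(x_i|x^{i-1},y)}{p(x_i|x^{i-1})}.
\end{equation}

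\textbf{Step 1 (factorization).} I will show that the argument of $\ln_q$ in the definition of mutual $q$-information factorizes as
\begin{equation}
\frac{p(x_1,\dots,x_n,y)}{p(x_1,\dots,x_n)\,p(y)}=\frac{p(x_1,\dots,x_n|y)}{p(x_1,\dots,x_n)}=\prod_{i=1}^n \frac{p(x_i|x^{i-1},y)}{p(x_i|x^{i-1})}=\prod_{i=1}^n a_i .
\end{equation}
This follows by applying the ordinary chain rule for probabilities to the numerator $p(x^n|y)$ and to the denominator $p(x^n)$ separately. It is exactly the factorization used in the proof of Lemma \ref{lemb}, now applied to a ratio.

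\textbf{Step 2 (iterated $q$-logarithm).} Next I will prove by induction on $n$ that, for positive $a_1,\dots,a_n$,
\begin{equation}
\ln_q\Big(\prod_{i=1}^n a_i\Big)=\sum_{i=1}^n \ln_q a_i+(1-q)\sum_{i=1}^{n}\ln_q a_i\,\ln_q\Big(\prod_{i'=i+1}^n a_{i'}\Big),
\end{equation}
with the empty product equal to $1$, so that the $i=n$ term vanishes. For the inductive step, write $\prod_{i=1}^n a_i=a_1B$ with $B=\prod_{i=2}^n a_i$. Applying (\ref{somalnq}) gives
\begin{equation}
\ln_q(a_1B)=\ln_q a_1+\ln_q B+(1-q)\ln_q a_1\ln_q B .
\end{equation}
Expanding $\ln_q B$ by the induction hypothesis then reproduces the claimed formula, since the cross term for index $1$ is exactly $(1-q)\ln_q a_1\ln_q(\prod_{i'\ge2}a_{i'})$.

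\textbf{Step 3 (averaging).} Finally I will multiply the identity of Step 2 by $p(x_1,\dots,x_n,y)$ and sum over all variables. The correction sum is left as it is; it is precisely the $(1-q)$ term in the statement. For each linear term, $\ln_q a_i$ depends only on $(x_1,\dots,x_i,y)$. Summing out $x_{i+1},\dots,x_n$ therefore yields
\begin{equation}
\sum_{x_1,\dots,x_i,y}p(x_1,\dots,x_i,y)\ln_q a_i=I_q(X_i;Y|X_{i-1},\dots,X_1),
\end{equation}
which is the conditional mutual $q$-information as defined above, with $Z=(X_{i-1},\dots,X_1)$.

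I expect the main obstacle to be bookkeeping rather than analysis. First, the statement's notation (the ``$=$'' versus ``$+$'' issue, and $p(x_i;y|\cdot)$ meaning the joint conditional) must be read correctly. Second, the nested products in the correction term have to be matched exactly with the induction in Step 2, peeling off the \emph{first} factor rather than the last, so that each index $i$ pairs with the product over $i'>i$. The marginalization in Step 3 is routine once one checks that each $a_i$ involves no variables $x_j$ with $j>i$.
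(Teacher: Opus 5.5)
Your proposal is correct and takes essentially the same approach as the paper: you factor $p(x_1,\dots,x_n,y)/(p(x_1,\dots,x_n)p(y))$ into the product of the $a_i$, expand $\ln_q$ of that product by iterating the pseudo-additivity (\ref{somalnq}), and identify the averaged linear terms as $I_q(X_i;Y|X_{i-1},\dots,X_1)$, leaving the $(1-q)$ correction term as is. Your reading of the second ``$=$'' as a ``$+$'' is exactly what the paper's proof establishes, and your explicit induction (peeling off the first factor) and marginalization check supply steps the paper only asserts.
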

\begin{proof}
We have that
\begin{eqnarray}
I_{q}(X_{1},X_{2},...,X_{n},Y)&=& \sum_{x_{1},x_{2},...,x_{n},y}p(x_{1},x_{2},...,x_{n},y)\ln_{q}\left[\frac{p(x_{1},x_{2},...,x_{n},y)}{p(x_{1},x_{2},...,x_{n})p(y)}\right] \nonumber \\
&=&\sum_{x_{1},x_{2},...,x_{n},y}p(x_{1},x_{2},...,x_{n},y) \nonumber \\
&\times& \ln_{q}\prod_{i=1}^{n}\left[\frac{p(x_{i};y|x_{i-1}, x_{i-2},...x_{1})}{p(x_{i}|x_{i-1}, x_{i-2},...x_{1})p(y|x_{i-1}, x_{i-2},...x_{1})}\right] \nonumber \\
&=& \sum_{x_{1},x_{2},...,x_{n},y} \sum_{i=1}^{n}p(x_{1},x_{2},...,x_{n},y) \nonumber \\
&\times& \ln_{q}\left[\frac{p(x_{i};y|x_{i-1}, x_{i-2},...x_{1})}{p(x_{i}|x_{i-1}, x_{i-2},...x_{1})p(y|x_{i-1}, x_{i-2},...x_{1})}\right]
\end{eqnarray}
\begin{eqnarray}
&+&(1-q)\sum_{x_{1},x_{2},...,x_{n},y}p(x_{1},x_{2},...,x_{n},y) \nonumber \\
&\times& \sum_{i=1}^{n} \ln_{q} \left[\frac{p(x_{i};y|x_{i-1}, x_{i-2},...x_{1})}{p(x_{i}|x_{i-1}, x_{i-2},...x_{1})p(y|x_{i-1}, x_{i-2},...x_{1})}\right] \nonumber \\
& \times & \ln_{q} \left [\prod_{i'=i+1}^{n}\left[\frac{p(x_{i'};y|x_{i'-1}, x_{i'-2},...x_{1})}{p(x_{i'}|x_{i'-1}, x_{i'-2},...x_{1})p(y|x_{i'-1}, x_{i'-2},...x_{1})}\right]\right] \nonumber \\
&=&\sum_{i=1}^{n}I_{q}(X_{i};Y|X_{i-1}, X_{i-2},...X_{1}) \nonumber \\
&+&(1-q)\sum_{x_{1},x_{2},...,x_{n},y}p(x_{1},x_{2},...,x_{n},y) \nonumber \\
&\times& \sum_{i=1}^{n} \ln_{q} \left[\frac{p(x_{i};y|x_{i-1}, x_{i-2},...x_{1})}{p(x_{i}|x_{i-1}, x_{i-2},...x_{1})p(y|x_{i-1}, x_{i-2},...x_{1})}\right] \nonumber \\
& \times & \ln_{q} \left [\prod_{i'=i+1}^{n}\left[\frac{p(x_{i'};y|x_{i'-1}, x_{i'-2},...x_{1})}{p(x_{i'}|x_{i'-1}, x_{i'-2},...x_{1})p(y|x_{i'-1}, x_{i'-2},...x_{1})}\right]\right] \nonumber
\end{eqnarray}
\end{proof}

The next Lemma is a consequence of the concavity of the $q$-logarithm, and it is the analogous the classical log sum inequality \cite{Cover}. A similar result was obtained by Furuichi \cite{Furuichi}.
\begin{lemma}
($q$-ln sum inequality) Let $r_{1},r_{2},...,r_{n}$ and $s_{1},s_{2},...,s_{n}$ nonnegative numbers. Then
\begin{equation}
\sum_{i=1}^{n} r_{i} \ln_{q}\left(\frac{r_{i}}{s_{i}}\right) \geq \left(\sum_{i=1}^{n} r_{i}\right)\ln_{q}\left(\frac{\sum_{i=1}^{n}r_{i}}{\sum_{i=1}^{n}s_{i}}\right)
\end{equation}
with equality if and only $\frac{r_{i}}{s_{i}}=c$, with $c=constant$.
\end{lemma}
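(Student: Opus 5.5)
The plan is to reduce the statement to Jensen's inequality for a suitable convex function, as in the classical log sum inequality \cite{Cover}. Set $R=\sum_{i=1}^{n} r_i$ and $S=\sum_{i=1}^{n} s_i$. Adopt the usual conventions $0\ln_q(0/s)=0$ and $r\ln_q(r/0)=+\infty$ for $r>0$. Then terms with $r_i=0$ contribute nothing, any index with $s_i=0<r_i$ makes the left side infinite, and we may assume all $s_i>0$. Write each term as
\begin{equation}
r_i\ln_q\left(\frac{r_i}{s_i}\right)=s_i\, f\!\left(\frac{r_i}{s_i}\right),\qquad f(t)=t\ln_q t=\frac{t^{2-q}-t}{1-q}.
\end{equation}
This turns the left-hand side into $S\sum_i \frac{s_i}{S} f(t_i)$ with $t_i=r_i/s_i$, a convex combination of values of $f$ with weights $s_i/S$.

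The key step is the convexity of $f$ on $(0,\infty)$. A direct computation gives $f'(t)=\frac{(2-q)t^{1-q}-1}{1-q}$ and $f''(t)=(2-q)t^{-q}$. Hence $f$ is strictly convex whenever $q<2$, and in particular in the range $0\le q<1$ used throughout the paper. I would state this restriction explicitly. For $q=2$ one has $f(t)=t-1$, which is affine, so the inequality holds with equality for every choice of $r_i,s_i$. For $q>2$, $f$ is concave and the inequality reverses. So the lemma as written, without a range for $q$, needs the hypothesis $q<2$, and the equality clause needs it as well.

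Applying Jensen's inequality with weights $s_i/S$ gives
\begin{equation}
\sum_{i=1}^{n} s_i f(t_i)\ \ge\ S\, f\!\left(\sum_{i=1}^{n}\frac{s_i}{S}\,t_i\right)=S\,f\!\left(\frac{R}{S}\right)=R\,\ln_q\!\left(\frac{R}{S}\right),
\end{equation}
which is exactly the claimed inequality.

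For the equality case, strict convexity of $f$ when $q<2$ means Jensen is an equality if and only if all $t_i$ with positive weight coincide, that is, $r_i/s_i=c$ for a constant $c$. The only real obstacle is bookkeeping: handling the boundary cases with zeros consistently, and noting that the concavity of $\ln_q$ mentioned before the lemma is not used directly. What is used is the convexity of $t\ln_q t$, which is the correct substitute for the $q$-logarithm.
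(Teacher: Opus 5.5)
Your proposal is correct and is essentially the paper's proof: both write the left-hand side as $\sum_i s_i f(r_i/s_i)$ with $f(t)=t\ln_q t$, obtain convexity from $f''(t)=(2-q)t^{-q}$, and apply Jensen's inequality with weights $s_i/\sum_j s_j$ at the points $r_i/s_i$. Your insistence on $q<2$ is a correct sharpening of the paper's ``$q\le 2$'' (whose display also misprints the sign of $f''$): at $q=2$ one has $f(t)=t-1$, so equality holds for all positive $r_i,s_i$ whatever the ratios, and the ``only if'' part of the equality clause fails.
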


\begin{proof}
Note that $f(u)=u \ln_{q}(u)$ is strictly convex for $q \leq 2$:
\begin{equation}
f"(u)=\frac{(2-q)(1-q)p^{1-q-1}}{1-q}=(2-q)p^{-q}\leq 0
\end{equation}
Therefore the Jensen's inequality provides
\begin{equation}
\sum_{i}\beta_{i}f(u_{i})\geq f(\sum_{i}\beta_{i}u_{i}),
\end{equation}
with $ \sum_{i}\beta_{i}=1$ and $\beta_{i}\geq 0$. The conclusion follows setting $\beta_{i}=\frac{s_{i}}{\sum_{j=1}^{n}s_{j}}$ and $u_{i}=\frac{r_{i}}{s_{i}}$, so that
\begin{equation}
\sum_{i=1}^{n} \frac{r_{i}}{\sum_{j=1}^{n}s_{j}} \ln_{q}\left(\frac{r_{i}}{s_{i}}\right) \geq \left(\sum_{i=1}^{n}\frac{r_{i}}{\sum_{j=1}^{n}s_{j}}\right)\ln_{q} \left(\sum_{i=1}^{n}\frac{r_{i}}{\sum_{j=1}^{n}s_{j}}\right)
\end{equation}
\end{proof}

Next, we have the $q$-information inequality.
\begin{theorem}\label{naoneg}
For $q \leq 2$, the relative $q$-entropy satisfies
\begin{equation}
D_{q}(r(x)||s(x)) \geq 0
\end{equation}
where $r(x)$ and $s(x)$ are two probability functions and we have the equality if and only if $r(x)=s(x)$, for all $x$.
\end{theorem}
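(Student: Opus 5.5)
The plan is to apply the $q$-ln sum inequality (the lemma just proved) directly, with $r_i = r(x)$ and $s_i = s(x)$ running over the alphabet $\chi$. The lemma gives
\begin{equation}
D_{q}(r\|s)=\sum_{x\in\chi} r(x)\ln_{q}\left(\frac{r(x)}{s(x)}\right) \geq \left(\sum_{x\in\chi} r(x)\right)\ln_{q}\left(\frac{\sum_{x} r(x)}{\sum_{x} s(x)}\right) = 1\cdot \ln_{q}(1) = 0 ,
\end{equation}
because both functions are probability mass functions and $\ln_q 1 = 0$ for every $q$. I would restrict the sum to the support of $r$, using the convention $0\ln_q(0/s)=0$. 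If $s(x)=0$ for some $x$ with $r(x)>0$, then $D_q=+\infty$ and there is nothing to prove. Restricting to the support only lowers $\sum s$ to at most $1$, so the right-hand side becomes $\ln_q(1/\sum_{\mathrm{supp}\,r} s) \ge 0$, since $\ln_q$ is increasing and $\ln_q 1=0$.

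The convexity input needed in the range $q\le 2$ is that $f(u)=u\ln_q u$ satisfies $f''(u)=u^{-q}>0$ for $u>0$. The displayed computation in the lemma contains a sign slip and an extra factor $(2-q)$. I would redo it: $f(u)=(u^{2-q}-u)/(1-q)$, so $f''(u)=(2-q)u^{-q}$, and this is strictly positive for $q<2$. At $q=2$, $f$ is affine, so Jensen's inequality holds with equality, which is all the nonnegativity claim requires.

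The main obstacle is the equality statement. Strict convexity for $q<2$ gives equality in Jensen's inequality if and only if all the $u_x=r(x)/s(x)$ with $s(x)>0$ coincide, so $r=cs$ on the support. Summing then forces $c=1$, and in the support-restricted case it also forces $\sum_{\mathrm{supp}\,r}s=1$, so $r=s$. At $q=2$, however, $D_2(r\|s)=\sum_x r(x)\bigl(s(x)/r(x)-1\bigr)=\sum_{\mathrm{supp}\,r}s-1$, which vanishes whenever $s$ is supported inside $\mathrm{supp}\,r$, even if $r\neq s$. So at $q=2$ the ``only if'' direction fails. I would therefore prove the equality characterization for $q<2$ and either note explicitly that $q=2$ gives only $D_q\ge0$, or read the theorem's equality clause as requiring $q<2$.
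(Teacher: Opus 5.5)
Your main argument is the paper's proof: apply the $q$-ln sum inequality with $\sum_x r(x)=\sum_x s(x)=1$ and use $\ln_q 1=0$. Your objection to the equality clause at $q=2$ is correct, and the paper misses it. Its lemma asserts strict convexity of $u\ln_q u$ for all $q\le 2$. But $u\ln_2 u=u-1$ is affine, so $r\ln_2(r/s)=r-s$ and $D_2(r\|s)=1-\sum_{\mathrm{supp}\,r}s$. This vanishes for every pair of fully supported distributions.

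There are two slips in how you wrote this up:
\begin{itemize}
\item Your formula for $D_2$ has the sign reversed, since $\ln_2 t=1-1/t$, not $1/t-1$. As written, your $D_2$ would be $\le 0$, contradicting the inequality you are proving, although the vanishing conclusion survives.
\item Your convexity paragraph contradicts itself. Your own recomputation $f''(u)=(2-q)u^{-q}$ is right, so the paper's factor $(2-q)$ is not spurious; only its claim ``$\le 0$'' is wrong. The value $f''(u)=u^{-q}$ you state first is the $q=1$ case.
\end{itemize}

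The step that actually fails is the support reduction. The claim ``if $s(x)=0$ for some $x$ with $r(x)>0$, then $D_q=+\infty$'' holds only for $q\le 1$. For $1<q\le 2$ you have $\ln_q t=(1-t^{1-q})/(q-1)\to 1/(q-1)$ as $t\to\infty$. Under the natural limiting convention, such terms contribute the finite amount $r(x)/(q-1)$, so this case is not vacuous.

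It is easily repaired. Let $A=\{r>0,\ s>0\}$ and $B=\{r>0,\ s=0\}$, and write $r(A)=\sum_{x\in A}r(x)$, and similarly for $s(A)$ and $r(B)$. Apply the lemma on $A$ and add $r(B)/(q-1)=(1-r(A))/(q-1)$ to get
\[
D_q(r\|s)\ \ge\ \frac{1-r(A)^{2-q}\,s(A)^{q-1}}{q-1}\ \ge\ 0 .
\]
The right-hand side is strictly positive when $B\neq\emptyset$ and $q<2$, so your equality characterization for $q<2$ is unaffected. The paper's proof ignores zeros of $s$ altogether, so this is a gap only in the extra care you add, not relative to the paper.
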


\begin{proof}
Using that $q$-ln sum inequality,
\begin{eqnarray}
D_{q}(r(x)||s(x))&=&\sum_{x}r(x)\ln_{q}\left(\frac{r(x)}{s(x)}\right) \nonumber \\
                 &\geq& \left(\sum_{x}r(x)\right)\ln_{q}\left(\frac{\sum_{x}r(x)}{\sum_{x}s(x)}\right) \nonumber \\
                 &=0&,
\end{eqnarray}
where the equality occurs if and only if $r(x)=s(x)$, once $r(x)$ and $s(x)$ are probability functions.
\end{proof}

\begin{theorem}
Let $X$ a random variable with probability function $p(x)$ and $q \leq 2$. Then
\begin{equation}
H_{q} \leq \ln_{q} |\chi|,
\end{equation}
where $|\chi|$ denotes the numbers of elements in the range of $X$ and the equality is satisfied if and only $X$ has a uniform distribution.
\end{theorem}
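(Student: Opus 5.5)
The plan is to compare $p$ with the uniform distribution $u(x)=1/|\chi|$ through the relative $q$-entropy and then invoke Theorem \ref{naoneg}, which is available for $q\le 2$. Write $n=|\chi|$. First I expand
\begin{equation*}
D_q(p\|u)=\sum_{x}p(x)\ln_q\bigl(n\,p(x)\bigr)
\end{equation*}
using the product rule (\ref{somalnq}) with the two factors $n$ and $p(x)$:
\begin{equation*}
\ln_q\bigl(n\,p(x)\bigr)=\ln_q n+\ln_q p(x)+(1-q)\ln_q n\,\ln_q p(x).
\end{equation*}
Summing against $p(x)$ and noting $1+(1-q)\ln_q n=n^{1-q}$, this gives the identity
\begin{equation*}
D_q(p\|u)=\ln_q n-n^{1-q}H_q(X).
\end{equation*}

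Theorem \ref{naoneg} then yields $n^{1-q}H_q(X)\le \ln_q n$, that is,
\begin{equation*}
H_q(X)\le n^{q-1}\ln_q n .
\end{equation*}
Equality holds if and only if $D_q(p\|u)=0$, which by Theorem \ref{naoneg} happens if and only if $p=u$. This is the sharp bound, and it is attained exactly at the uniform distribution. A direct check confirms it: $-\ln_q(1/n)=n^{q-1}\ln_q n$.

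The remaining step is to pass from $n^{q-1}\ln_q n$ to $\ln_q n$ as the statement is worded. When $q\le 1$ and $n\ge 1$ we have $n^{q-1}\le 1$ and $\ln_q n\ge 0$, so $H_q\le n^{q-1}\ln_q n\le \ln_q n$, which is the stated inequality.

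This last step is the one I expect to be the real obstacle. For $1<q\le 2$ the factor $n^{q-1}$ exceeds $1$ when $n\ge 2$. In that range the uniform distribution already gives $H_q=n^{q-1}\ln_q n>\ln_q n$, so the inequality as worded cannot be obtained there, and no other route can rescue it. Likewise, the "equality iff uniform" clause in its literal form only makes sense for the sharp bound $n^{q-1}\ln_q n=\ln_{2-q} n$. I would therefore present the argument as proving the sharp bound, with the equality case, for all $q\le 2$. I would then deduce the stated bound $H_q\le \ln_q|\chi|$ for $q\le 1$, and point out explicitly that for $1<q\le 2$ the correct right-hand side is $|\chi|^{q-1}\ln_q|\chi|$.
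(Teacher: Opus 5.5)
Your computation is exactly the paper's. The paper also compares $p$ with the uniform distribution and expands $\ln_q(|\chi|\,p(x))$ with (\ref{somalnq}). It reaches $\sum_x p(x)\ln_q p(x)\,[1+(1-q)\ln_q|\chi|]+\ln_q|\chi|\ge 0$, i.e.\ $|\chi|^{1-q}H_q\le \ln_q|\chi|$. It then writes ``therefore $H_q\le\ln_q|\chi|$'', silently discarding the factor $1+(1-q)\ln_q|\chi|=|\chi|^{1-q}$.

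As you observe, that step is valid only for $q\le 1$. For $1<q\le 2$ the statement itself is false: with $q=2$, $|\chi|=2$ and $p$ uniform, $H_2=1>\tfrac12=\ln_2 2$. Your diagnosis is correct. What the argument actually proves is $H_q\le|\chi|^{q-1}\ln_q|\chi|=\ln_{2-q}|\chi|$, and your reduction to the stated bound for $q\le1$ is fine.

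Two points to sharpen in your write-up.

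First, the equality clause of the statement also fails for $q<1$, not only for $q>1$. When $|\chi|\ge 2$ one has $|\chi|^{q-1}\ln_q|\chi|<\ln_q|\chi|$, so the stated bound is strict for every $p$, the uniform one included. The stated bound and its equality case are simultaneously correct only in the Shannon limit $q=1$.

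Second, your claim ``equality iff $p=u$ for all $q\le 2$'' relies on the equality clause of Theorem \ref{naoneg}. That clause needs strict convexity of $u\ln_q u$. Since $f''(u)=(2-q)u^{-q}$ vanishes at $q=2$, the clause breaks there. Concretely, $\ln_2 t=1-1/t$ gives $H_2(X)=|\mathrm{supp}\,p|-1$. So equality in $H_2\le|\chi|-1$ holds for every fully supported $p$, not just the uniform one. State the equality characterization for $q<2$ only.
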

\begin{proof}
Consider $r(x)=|\chi|^{-1}$ the uniform probability function. Thus
\begin{eqnarray}
D_{q}(p(x)||r(x))&=&\sum_{x}p(x)\ln_{q}\left(p(x)|\chi|\right) \nonumber \\
&=&\sum_{x}p(x)\ln_{q}p(x)\left[1+(1-q)\ln_{q}|\chi|\right]+\ln_{q}|\chi| \nonumber \\
&\geq& 0
\end{eqnarray}
by nonnegativity of relative $q$-entropy. Therefore
\begin{equation}
-\sum_{x}p(x)\ln_{q}p(x) \leq \ln_{q} |\chi|
\end{equation}
as desired.
\end{proof}

The next theorem is analogous to the data processing inequality in classical information theory \cite{Cover}
\begin{theorem}
If $X \rightarrow Y \rightarrow Z$ and $0 \leq q <1$ (Markov chain), then
\begin{eqnarray}
I_{q}(X;Y) \geq I_{q}(X;Z)+(1-q)\sum_{x,y,z}p(x,y,z)\ln_{q}\frac{p(x,z)}{p(x)p(z)} \ln_{q}\frac{p(x,y,z)}{p(x|z)p(y|z)} \nonumber
\end{eqnarray}
\end{theorem}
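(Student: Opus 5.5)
Following the classical proof of the data processing inequality, I will expand the three-variable mutual $q$-information $I_q(X;Y,Z)$ in two ways, using the multiplicative rule (\ref{somalnq}) in place of the additive rule for $\ln$. The starting identity is
\begin{equation*}
\frac{p(x,y,z)}{p(x)p(y,z)}=\frac{p(x,z)}{p(x)p(z)}\cdot\frac{p(x,y|z)}{p(x|z)p(y|z)} ,
\end{equation*}
with the analogous factorisation through $Y$:
\begin{equation*}
\frac{p(x,y,z)}{p(x)p(y,z)}=\frac{p(x,y)}{p(x)p(y)}\cdot\frac{p(x,z|y)}{p(x|y)p(z|y)} .
\end{equation*}
I will apply $\ln_q$ to each, use (\ref{somalnq}), and average against $p(x,y,z)$.

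From the factorisation through $Z$ I obtain
\begin{equation*}
I_q(X;Y,Z)=I_q(X;Z)+I_q(X;Y|Z)+(1-q)\sum_{x,y,z}p(x,y,z)\ln_q\frac{p(x,z)}{p(x)p(z)}\ln_q\frac{p(x,y|z)}{p(x|z)p(y|z)} .
\end{equation*}
From the factorisation through $Y$ I obtain the same kind of expansion with $I_q(X;Y)$, $I_q(X;Z|Y)$ and a cross term. The Markov property $X\to Y\to Z$ gives $p(x,z|y)=p(x|y)p(z|y)$, so the second factor equals $1$. Since $\ln_q 1=0$, both $I_q(X;Z|Y)$ and that cross term vanish, and exactly $I_q(X;Y,Z)=I_q(X;Y)$.

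Equating the two expansions gives
\begin{equation*}
I_q(X;Y)=I_q(X;Z)+I_q(X;Y|Z)+(1-q)\,\Sigma ,
\end{equation*}
where $\Sigma$ denotes the cross-term sum. The inequality then follows once $I_q(X;Y|Z)\ge 0$. For this I will write it as $\sum_z p(z)\,D_q\big(p(\cdot,\cdot|z)\,\|\,p(\cdot|z)p(\cdot|z)\big)$ and apply Theorem \ref{naoneg}, which holds for $q\le 2$ and hence for $0\le q<1$.

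The main obstacle is reconciling the resulting cross term with the one written in the statement. The statement's second logarithm has argument $p(x,y,z)/(p(x|z)p(y|z))$, not $p(x,y|z)/(p(x|z)p(y|z))$. I expect the intended quantity is the latter, i.e.\ a typographical variant of the conditional ratio, and I would read it so. If the literal form is insisted on, I would compare the two logarithms: their arguments differ by the factor $p(z)\le 1$, and $\ln_q$ is increasing. The first logarithm's sign is not fixed, however, so I would fall back to stating the identity with the conditional ratio. Degenerate terms with zero probabilities will be handled by the usual convention $0\ln_q(\cdot)=0$.
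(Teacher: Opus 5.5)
Your proposal is correct and follows the paper's proof essentially step for step. The paper also expands $I_q(X;Y,Z)$ once through $Z$ and once through $Y$ using (\ref{somalnq}), removes $I_q(X;Z|Y)$ and the second cross term by the Markov property, and concludes from $I_q(X;Y|Z)\ge 0$. Its derived cross term likewise contains the conditional ratio $p(x,y|z)/(p(x|z)p(y|z))$, so treating the statement's $p(x,y,z)$ as a typo matches what the paper actually proves. Two small differences work in your favour: you justify $I_q(X;Y|Z)\ge 0$ explicitly as a $p(z)$-average of relative $q$-entropies via Theorem \ref{naoneg}, where the paper only asserts it, and you do not rely on the paper's extra, unneeded and unjustified claim that the cross sum is nonpositive.
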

\begin{proof}
We have that
\begin{eqnarray}
I_{q}(X;Y,Z)&=&\sum_{x,y,z}p(x,y,z)\ln_{q}\frac{p(x,y,z)}{p(x)p(y,z)} \nonumber \\
            &=&\sum_{x,y,z}p(x,y,z)\ln_{q}\frac{p(x,z)p(x,y|z)}{p(x)p(z)p(x|z)p(y|z)} \nonumber \\
            &=&\sum_{x,y,z}p(x,y,z)\ln_{q}\frac{p(x,z)}{p(x)p(z)}+\sum_{x,y,z}p(x,y,z)\ln_{q}\frac{p(x,y|z)}{p(x|z)p(y|z)} \nonumber \\
            &+&(1-q)\sum_{x,y,z}p(x,y,z)\ln_{q}\frac{p(x,z)}{p(x)p(z)}\ln_{q}\frac{p(x,y|z)}{p(x|z)p(y|z)} \nonumber \\
            &=&I_{q}(X;Z)+I_{q}(X;Y|Z)+(1-q)\sum_{x,y,z}p(x,y,z)\ln_{q}\frac{p(x,z)}{p(x)p(z)}\ln_{q}\frac{p(x,y|z)}{p(x|z)p(y|z)} \nonumber
\end{eqnarray}
Similarly, one can prove that
\begin{eqnarray}
I_{q}(X;Y,Z)=I_{q}(X;Y)+I_{q}(X;Z|Y)+(1-q)\sum_{x,y,z}p(x,y,z)\ln_{q}\frac{p(x,y)}{p(x)p(y)}\ln_{q}\frac{p(x,z|y)}{p(x|y)p(z|y)} \nonumber
\end{eqnarray}
Note that $I_{q}(X;Y|Z)=0$, since $X$ and $Z$ are conditionally independent given $Y$. Hence
\begin{eqnarray}
I_{q}(X;Y)= I_{q}(X;Z)+I_{q}(X;Y|Z)+(1-q)\sum_{x,y,z}p(x,y,z)\ln_{q}\frac{p(x,z)}{p(x)p(z)}\ln_{q}\frac{p(x,y|z)}{p(x|z)p(y|z)} \nonumber
\end{eqnarray}
The proof is finished since
\begin{equation}
I_{q}(X;Y|Z) \geq 0
\end{equation}
and
\begin{equation}
\sum_{x,y,z}p(x,y,z)\ln_{q}\frac{p(x,z)}{p(x)p(z)}\ln_{q}\frac{p(x,y|z)}{p(x|z)p(y|z)} \leq 0
\end{equation}
\end{proof}

\section{Stochastic analysis and the second law of thermodynamics}

The following chain rule for relative $q$-entropy is used to prove the violation of the second law of thermodynamics analogously to classical analysis to Shannon relative entropy where the second law is derived \cite{Cover} (in this case, it is not violated).

\begin{proposition}\label{propb}
For $0 \leq q <1$, the relative $q$-entropy satisfies
\begin{eqnarray}
D_{q}(p(x,y)||r(x,y))&=&D_{q}((p(x)||r(x))+D_{q}((p(y|x)||r(y|x)) \nonumber \\
                     &+&(1-q)\sum_{x,y}p(x,y)\ln_{q}\frac{p(x)}{r(x)}\ln_{q}\frac{p(y|x)}{r(y|x)}
\end{eqnarray}
\end{proposition}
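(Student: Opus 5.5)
The identity in Proposition \ref{propb} is purely algebraic; I would obtain it exactly as in the earlier chain rules, by factoring the ratio inside the $q$-logarithm and applying the product rule (\ref{somalnq}). First I write $p(x,y)=p(x)p(y|x)$ and $r(x,y)=r(x)r(y|x)$, so that for every pair $(x,y)$ with $p(x,y)>0$
\begin{equation}
\frac{p(x,y)}{r(x,y)}=\frac{p(x)}{r(x)}\cdot\frac{p(y|x)}{r(y|x)} .
\end{equation}
Setting $a=p(x)/r(x)$ and $b=p(y|x)/r(y|x)$, the rule (\ref{somalnq}) gives $\ln_q(ab)=\ln_q a+\ln_q b+(1-q)\ln_q a\,\ln_q b$.

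Next I multiply by $p(x,y)$ and sum over $x,y$, which splits $D_q(p(x,y)||r(x,y))$ into three sums. In the first, $\ln_q a$ depends only on $x$, so I sum over $y$ using $\sum_y p(x,y)=p(x)$. This gives $\sum_x p(x)\ln_q\frac{p(x)}{r(x)}=D_q(p(x)||r(x))$. The second sum is $\sum_{x,y}p(x,y)\ln_q\frac{p(y|x)}{r(y|x)}$, which I take as the definition of the conditional relative $q$-entropy $D_q(p(y|x)||r(y|x))$. This is in direct analogy with the conditional $q$-entropy defined earlier, where the average is taken against the joint law $p(x,y)$. The third sum is exactly the cross term in the statement.

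The only point requiring care is the support issue. I would adopt the standard conventions $0\ln_q(0/r)=0$ and absolute continuity of $p$ with respect to $r$, so that all ratios are finite wherever $p(x,y)>0$. Restricting every sum to $\{p(x,y)>0\}$ also ensures $p(x)>0$ and $p(y|x)>0$ there, so each factorisation above is legitimate.

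The hypothesis $0\le q<1$ is not actually used in the identity itself; it holds for all $q$. That hypothesis matters only later, when the sign of the cross term is analysed to compare $D_q$ along a Markov chain.

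The main obstacle is therefore not technical. It is making explicit the definition of $D_q(p(y|x)||r(y|x))$ as a $p(x,y)$-weighted average, since the paper has not stated it; once that definition is fixed, the proof is two lines.
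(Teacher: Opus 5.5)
Your proposal is correct and follows the paper's proof essentially verbatim. Like the paper, you factor the ratio as $\frac{p(x)}{r(x)}\cdot\frac{p(y|x)}{r(y|x)}$, apply the $q$-logarithm product rule, and read off the three sums. Your extra remarks are accurate and make explicit what the paper leaves implicit: the support/absolute-continuity convention, taking $D_q(p(y|x)||r(y|x))$ to be the $p(x,y)$-weighted average, and the fact that the identity holds without the hypothesis $0\le q<1$.
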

\begin{proof}
\begin{eqnarray}
D_{q}(p(x,y)||r(x,y))&=&\sum_{x,y}p(x,y)\ln_{q}\frac{p(x,y)}{r(x,y)} \nonumber \\
                     &=&\sum_{x,y}p(x,y)\ln_{q}\frac{p(x)p(y|x)}{r(x)r(y|x)} \nonumber \\
                     &=&\sum_{x,y}p(x,y)\ln_{q}\frac{p(x)}{r(x)}+\sum_{x,y}p(x,y)\ln_{q}\frac{p(y|x)}{r(y|x)} \nonumber \\
                     &+&(1-q)\sum_{x,y}p(x,y)\ln_{q}\frac{p(x)}{r(x)}\ln_{q}\frac{p(y|x)}{r(y|x)}
\end{eqnarray}
\end{proof}

Now, we show that our formulation can be extended to a stochastic context.
\begin{definition}
The $q$-entropy of stochastic process $\{X_{i}\}$ is defined as
\begin{equation}
H_{q}(\chi)=\lim_{n\rightarrow\infty} \frac{1}{n}H_{q}(X_{1},X_{2},...,X_{n}),
\end{equation}
when the limit exists.
\end{definition}

Also we define the conditional $q$-entropy of stochastic process $\{X_{i}\}$
\begin{equation}
H'_{q}(\chi)=\lim_{n\rightarrow\infty}\frac{1}{n} \sum_{i=1}^{n}H_{q}(X_{n}|X_{n-1},X_{n-2},...,X_{1})
\end{equation}
when the limit exists.

\begin{theorem}
For an arbitrary stochastic process $\{X_{i}\}$, we have
\begin{equation}
H'_{q}(\chi) \leq H_{q}(\chi)
\end{equation}
\end{theorem}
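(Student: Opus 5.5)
The plan is to deduce the claim directly from Lemma \ref{lemb}. I read the summand in the definition of $H'_q(\chi)$ as $H_q(X_i|X_{i-1},\dots,X_1)$, so that the normalized partial sums are exactly the right-hand side of the chain inequality. With this reading, the whole statement follows by dividing the finite-$n$ inequality by $n$ and passing to the limit. The standing hypotheses $0\le q<1$ of Lemma \ref{lemb} are assumed throughout, and both limits are assumed to exist, as in the definitions of $H_q(\chi)$ and $H'_q(\chi)$.

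\emph{Step 1.} For each fixed $n$, Lemma \ref{lemb} gives
\begin{equation}
H_q(X_1,\dots,X_n)\ \ge\ \sum_{i=1}^{n} H_q(X_i|X_{i-1},\dots,X_1).
\end{equation}
This inequality rests on the pseudo-additivity (\ref{somalnq}), applied iteratively to the product $\prod_i p(x_i|x_{i-1},\dots,x_1)$. All conditional probabilities lie in $(0,1]$, so each factor $\ln_q p(\cdot|\cdot)$ is nonpositive. For $0\le q<1$, every cross term of the form $(1-q)\ln_q a\,\ln_q b$ is therefore nonnegative.

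\emph{Step 2.} Dividing by $n>0$ preserves the inequality:
\begin{equation}
\frac{1}{n}H_q(X_1,\dots,X_n)\ \ge\ \frac{1}{n}\sum_{i=1}^{n} H_q(X_i|X_{i-1},\dots,X_1).
\end{equation}

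\emph{Step 3.} Let $n\to\infty$. Non-strict inequalities are preserved under limits, so $H_q(\chi)\ge H'_q(\chi)$, which is the claim.

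The main point requiring care is Step 1 for general $n$. Lemma \ref{lemb} is stated, but its inductive justification deserves a check. I would verify it by induction on $n$ via $\ln_q(ab)=\ln_q a+\ln_q b+(1-q)\ln_q a\ln_q b$, taking $a=\prod_{i<n}p(x_i|\cdot)$ and $b=p(x_n|\cdot)$. Since $\ln_q a\,\ln_q b\ge0$, this gives $-\ln_q(ab)\ge -\ln_q a-\ln_q b$. Averaging with respect to $p(x_1,\dots,x_n)$ and using the inductive hypothesis on the marginal of $(X_1,\dots,X_{n-1})$ then closes the induction.

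A secondary subtlety concerns the definition of $H'_q$. If it literally means $H_q(X_n|X_{n-1},\dots,X_1)$ repeated $n$ times, then $H'_q(\chi)=\lim_n H_q(X_n|X_{n-1},\dots,X_1)$. In that case I would add a Cesàro-type argument, which requires that conditioning reduces $q$-entropy so that the conditional terms are monotone. That property has not yet been established in the paper, so I adopt the summation-index reading above.
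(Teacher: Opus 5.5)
Your argument is exactly the paper's (Lemma~\ref{lemb}, divide by $n$, let $n\to\infty$). However, Step~1 fails for $0\le q<1$, and the induction you propose to check it contains a sign error. You correctly note that $(1-q)\ln_q a\,\ln_q b\ge 0$ for $a,b\in(0,1]$. Substituting this into $\ln_q(ab)=\ln_q a+\ln_q b+(1-q)\ln_q a\,\ln_q b$ gives $\ln_q(ab)\ge\ln_q a+\ln_q b$, that is,
\[
-\ln_q(ab)\;\le\;-\ln_q a-\ln_q b ,
\]
not $\ge$. Iterating and averaging yields $H_q(X_1,\dots,X_n)\le\sum_{i=1}^n H_q(X_i|X_{i-1},\dots,X_1)$, which is the reverse of Lemma~\ref{lemb}. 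The paper's proof has the same defect. In the two-variable chain computation, the cross term correctly appears as $-(1-q)\sum_{x,y}p(x,y)\ln_q p(x)\ln_q p(y|x)$ and is then silently turned into $+(1-q)\sum\cdots$. Lemma~\ref{lemb} simply iterates that step.

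The gap cannot be closed for $0\le q<1$, because the statement is false there. For $q<1$ one has $H_q(X)=\bigl(1-\sum_x p(x)^{2-q}\bigr)/(1-q)\in[0,1/(1-q)]$, so $H_q(\chi)=0$ for every process. Now take $X_i$ i.i.d.\ uniform on $\{0,1\}$. Then $H_q(X_1,\dots,X_n)=(1-2^{-n(1-q)})/(1-q)$, whereas every term $H_q(X_i|X_{i-1},\dots,X_1)$ equals $(1-2^{q-1})/(1-q)>0$. Hence $H'_q(\chi)=(1-2^{q-1})/(1-q)>0=H_q(\chi)$ under either reading of the definition of $H'_q$. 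At $q=0$ the failure is visible already at $n=2$: $H_0(X_1,X_2)=\tfrac34<1=H_0(X_1)+H_0(X_2|X_1)$.

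What your three steps actually prove is the case $q>1$. There $1-q<0$ makes the cross term nonpositive, so $-\ln_q(ab)\ge-\ln_q a-\ln_q b$, your induction closes, and dividing by $n$ and passing to the limit is valid. For $0\le q<1$ the correct conclusion is the opposite inequality $H_q(\chi)\le H'_q(\chi)$, which is trivial since $H_q(\chi)=0$.
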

\begin{proof}
Using the Lemma \ref{lemb}, we obtain
\begin{eqnarray}
\sum_{i=1}^{n}H_{q}(X_{i}|X_{i-1},X_{i-2},...,X_{1}) &\leq& H_{q}(X_{1}, X_{2},...,X_{n}) \nonumber \\
\frac{1}{n}\sum_{i=1}^{n}H_{q}(X_{i}|X_{i-1},X_{i-2},...,X_{1}) &\leq& \frac{1}{n}H_{q}(X_{1}, X_{2},...,X_{n})
\end{eqnarray}
Thus
\begin{eqnarray}
\lim_{n\rightarrow\infty}\frac{1}{n} \sum_{i=1}^{n}H_{q}(X_{i}|X_{i-1},X_{i-2},...,X_{1}) &\leq& \lim_{n\rightarrow\infty}\frac{1}{n}H_{q}(X_{1},X_{2},...,X_{n}) \nonumber \\
&=&H_{q}(\chi)
\end{eqnarray}
\end{proof}

Particularly, for a stationary Markov chain
\begin{equation}
H'_{q}(\chi)=H_{q}(X_{i+1}|X_{i}) \leq H_{q}(X)
\end{equation}
Although Maxwell's demon was proposed more than a century ago, it remains a conceptually relevant problem in the foundations of physics \cite{Max,Sz,Fey, Land, Ben, Ben1}. Aquino \cite{Aquino} analyzes the implications of adopting a non-extensive thermodynamics, showing that, in this case, the effect of Maxwell's demon would be determined by the memory of the system and would therefore be temporary, in contrast with dynamical approaches based on Lévy statistics. We believe that the parameter $q$ of non-extensive entropy may provide a theoretical perspective for investigating apparent deviations from the standard formulation of the second law of thermodynamics in generalized scenarios. The following theorem illustrates this possibility through an analysis analogous to the proof of the second law of thermodynamics presented by Cover \cite{Cover} for an isolated system modeled by a Markov chain.
\begin{theorem}
Let $\psi_{n}$ and $\psi_{n}^{'}$ be two probability distributions on the state space of a Markov chain (probability function $p(x_{n},x_{n+1})=p(x_{n})r(x_{n+1},x_{n})$ where the $r$ is the probability transition function for Markov chain), and $0 \leq q <1$. Then
\begin{eqnarray}
H_{q}(\psi_{n+1})-H_{q}(\psi_{n}) \geq \frac{1-q}{[1+(1-q)\ln_{q}|\chi|]}\sum_{n}p(x_{n+1},x_{n})\ln_{q}[p(x_{n+1})|\chi|]\ln_{q}[p(x_{n},x_{n+1})|\chi|] \nonumber \end{eqnarray}
where $\chi$ is the uniform distribution
\end{theorem}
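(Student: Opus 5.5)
The plan is to mimic Cover's proof of the second law, with the uniform distribution $u(x)=|\chi|^{-1}$ as the stationary distribution of the chain (doubly stochastic transition $r$). The extra correction term comes from the non-additive rule (\ref{somalnq}). The comparison quantity is $D_q(\psi_n\|u)$.

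First, as in the proof of $H_q\le\ln_q|\chi|$, I expand
\begin{equation}
D_q(\psi_n\|u)=\sum_x p(x)\ln_q\big(p(x)|\chi|\big)=\ln_q|\chi|-\big[1+(1-q)\ln_q|\chi|\big]H_q(\psi_n).
\end{equation}
This gives
\begin{equation}
H_q(\psi_{n+1})-H_q(\psi_n)=\frac{D_q(\psi_n\|u)-D_q(\psi_{n+1}\|u)}{1+(1-q)\ln_q|\chi|}.
\end{equation}
The denominator is positive for $0\le q<1$. The statement therefore reduces to the inequality
\begin{equation}
D_q(\psi_n\|u)-D_q(\psi_{n+1}\|u)\ge(1-q)\sum p(x_n,x_{n+1})\ln_q[p(x_{n+1})|\chi|]\ln_q[p(x_n,x_{n+1})|\chi|],
\end{equation}
where the right-hand side is read as in the statement (after the positive factor is removed).

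Second, I apply the chain rule of Proposition \ref{propb} to the joint law $p(x_n,x_{n+1})=p(x_n)r(x_{n+1},x_n)$ against the reference joint law $u(x_n)r(x_{n+1},x_n)$, in both orderings.

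In the forward ordering the conditionals coincide, so $D_q(p(x_{n+1}|x_n)\|r)=0$. The cross term also vanishes, because $\ln_q 1=0$. Hence the joint relative $q$-entropy equals $D_q(\psi_n\|u)$.

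In the backward ordering, conditioning on $x_{n+1}$, the reference marginal of $x_{n+1}$ is again $u$ by stationarity. This gives
\begin{equation}
D_q(\psi_n\|u)=D_q(\psi_{n+1}\|u)+D_q(p(x_n|x_{n+1})\|u(x_n|x_{n+1}))+(1-q)\sum p\,\ln_q\frac{p(x_{n+1})}{u(x_{n+1})}\ln_q\frac{p(x_n|x_{n+1})}{u(x_n|x_{n+1})}.
\end{equation}
The conditional relative $q$-entropy is nonnegative by Theorem \ref{naoneg}, applied for each fixed $x_{n+1}$ since $q\le 2$. Dropping it yields the reduced inequality, with cross term
\[
(1-q)\sum p\,\ln_q[p(x_{n+1})|\chi|]\,\ln_q\frac{p(x_n|x_{n+1})}{u(x_n|x_{n+1})}.
\]

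The main obstacle is matching this cross term to the one printed in the statement, whose second factor is $\ln_q[p(x_n,x_{n+1})|\chi|]$. For the uniform stationary reference, $u(x_n|x_{n+1})=u(x_n)r(x_{n+1},x_n)/u(x_{n+1})=r(x_{n+1},x_n)$. So the natural second factor is $\ln_q[p(x_n|x_{n+1})/r(x_{n+1},x_n)]$, and I will argue it equals the printed one under the statement's identification $p(x_n,x_{n+1})=p(x_n)r$ with uniform reference.

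If that identification fails literally, I would instead re-expand using (\ref{somalnq}) to write $\ln_q$ of the ratio as a combination of $\ln_q[p(x_{n+1})|\chi|]$ and $\ln_q[p(x_n,x_{n+1})|\chi|]$. The only bookkeeping to track is the sign of $(1-q)\ln_q(\cdot)\ln_q(\cdot)$. Once the terms are identified, dividing by $1+(1-q)\ln_q|\chi|$ finishes the proof.
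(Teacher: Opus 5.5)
Your derivation up to the cross term is sound and is exactly the paper's argument. You expand $D_q(\psi_n\|u)=\ln_q|\chi|-[1+(1-q)\ln_q|\chi|]H_q(\psi_n)$, apply Proposition~\ref{propb} in both orderings, and drop the nonnegative conditional relative $q$-entropy. The paper's proof stops at $[H_q(\psi_{n+1})-H_q(\psi_n)][1+(1-q)\ln_q|\chi|]\ge T_q$, where the second factor of $T_q$ is $\ln_q[p(x_n|x_{n+1})/s(x_n|x_{n+1})]$; this is your cross term.

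The gap is your last step, and the identification you plan to argue is false. With $u(x_n|x_{n+1})=u(x_n)r(x_{n+1},x_n)/u(x_{n+1})=r(x_{n+1},x_n)$ and $p(x_n|x_{n+1})=p(x_n)r(x_{n+1},x_n)/p(x_{n+1})$, the ratio is $p(x_n)/p(x_{n+1})$. It is not $p(x_n,x_{n+1})|\chi|$.

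Your fallback of regrouping via (\ref{somalnq}) cannot work either, because the inequality as printed is false. Take the identity transition $r(x_{n+1},x_n)=\delta_{x_n x_{n+1}}$, which is doubly stochastic with uniform stationary law, and any non-uniform $\psi_n$. Then $\psi_{n+1}=\psi_n$, so the left side is $0$. The printed right side is $\frac{1-q}{1+(1-q)\ln_q|\chi|}\sum_x p(x)\bigl(\ln_q[p(x)|\chi|]\bigr)^2>0$; for instance $q=0$, $|\chi|=2$, $p=(3/4,1/4)$ gives $1/8$. Your cross term vanishes in this example, since $p(x_n)/p(x_{n+1})=1$ on the support, consistent with $0\ge 0$.

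So no bookkeeping can reach the printed form. What your argument (and the paper's) actually establishes is
\[
H_q(\psi_{n+1})-H_q(\psi_n)\ \ge\ \frac{1-q}{1+(1-q)\ln_q|\chi|}\sum_{x_n,x_{n+1}}p(x_n,x_{n+1})\,\ln_q\bigl[p(x_{n+1})|\chi|\bigr]\,\ln_q\frac{p(x_n)}{p(x_{n+1})}.
\]
State your conclusion in this form rather than claiming the printed one.
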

\begin{proof}
 We consider $\psi_{n}$ and $\psi_{n}^{'}$ be two probability distributions on the state space of a Markov chain with probability functions $p(x_{n},x_{n+1})=p(x_{n})r(x_{n+1},x_{n})$ and $s(x_{n},x_{n+1})=s(x_{n})r(x_{n+1},x_{n})$. By the proposition \ref{propb}, the relative $q$-entropy can be written as
 \begin{eqnarray}
 D_{q}(p(x_{n},x_{n+1})||s(x_{n},x_{n+1}))&=&D_{q}(p(x_{n})||s(x_{n}))+D_{q}(p(x_{n+1}||x_{n})||s(x_{n+1}||x_{n})) \nonumber \\
 &+&(1-q) \sum_{n}p(x_{n},x_{n+1})\ln_{q}\left[\frac{p(x_{n})}{s(x_{n})}\right]\ln_{q}\left[\frac{p(x_{n+1}|x_{n})}{s(x_{n+1}|x_{n})}\right] \nonumber \\
 &=&D_{q}(p(x_{n+1})||s(x_{n+1}))+D_{q}(p(x_{n}||x_{n+1})||s(x_{n}||x_{n+1})) \nonumber \\
 &+&(1-q)
 \sum_{n}p(x_{n+1},x_{n})\ln_{q}\left[\frac{p(x_{n+1})}{s(x_{n+1})}\right]\ln_{q}\left[\frac{p(x_{n}|x_{n+1})}{s(x_{n}|x_{n+1})}\right] \nonumber
 \end{eqnarray}
 But
 \begin{eqnarray}
 p(x_{n+1}|x_{n})=s(x_{n+1}|x_{n})=r(x_{n+1}|x_{n}).
 \end{eqnarray}
 Thus, by Theorem \ref{naoneg} (nonnegativity of relative q-entropy)
\begin{eqnarray}
 D_{q}(p(x_{n})||s(x_{n})) &\geq& D_{q}(p(x_{n+1})||s(x_{n+1})) \nonumber \\
 &+&(1-q)\sum_{n}p(x_{n+1},x_{n})\ln_{q}\left[\frac{p(x_{n+1})}{s(x_{n+1})}\right]\ln_{q}\left[\frac{p(x_{n}|x_{n+1})}{s(x_{n}|x_{n+1})}\right] \nonumber
\end{eqnarray}
namely
\begin{eqnarray}
 D_{q}(\psi_{n}||\psi_{n}^{'}) &\geq& D_{q}(\psi_{n+1})||\psi_{n+1}^{'}) \nonumber \\
 &+&(1-q)\sum_{n}p(x_{n+1},x_{n})\ln_{q}\left[\frac{p(x_{n+1})}{s(x_{n+1})}\right]\ln_{q}\left[\frac{p(x_{n}|x_{n+1})}{s(x_{n}|x_{n+1})}\right] 
\end{eqnarray}
If $\psi_{n}^{'}=\psi$ is a stationary distribution
\begin{eqnarray}
 D_{q}(\psi_{n}||\psi) &\geq& D_{q}(\psi_{n+1})||\psi) \nonumber \\
 &+&(1-q)\sum_{n}p(x_{n+1},x_{n})\ln_{q}\left[\frac{p(x_{n+1})}{s(x_{n+1})}\right]\ln_{q}\left[\frac{p(x_{n}|x_{n+1})}{s(x_{n}|x_{n+1})}\right] 
\end{eqnarray}
Particularly, for an uniform stationary distribution
\begin{eqnarray}
D_{q}(\psi_{n}||\psi)&=&\ln_{q}|\chi|-[1+(1-q)]\ln_{q}|\chi|H_{q}(\psi_{n}) \nonumber \\
&\geq&D_{q}(\psi_{n+1}||\psi)+(1-q)\sum_{n}p(x_{n+1},x_{n})\ln_{q}\left[\frac{p(x_{n+1})}{s(x_{n+1})}\right]\ln_{q}\left[\frac{p(x_{n}|x_{n+1})}{s(x_{n}|x_{n+1})}\right] \nonumber \\
&=&\ln_{q}|\chi|-[1+(1-q)]\ln_{q}|\chi|H_{q}(\psi_{n+1}) \nonumber \\
&+&(1-q)\sum_{n}p(x_{n+1},x_{n})\ln_{q}\left[\frac{p(x_{n+1})}{s(x_{n+1})}\right]\ln_{q}\left[\frac{p(x_{n}|x_{n+1})}{s(x_{n}|x_{n+1})}\right] 
\end{eqnarray}
Therefore
\begin{equation}
[H_{q}(\psi_{n+1})-H_{q}(\psi_{n})][1+(1-q)\ln_{q}|\chi|] \geq T_{q},
\end{equation}
where
\begin{equation}
T_{q}=(1-q)\sum_{n}p(x_{n+1},x_{n})\ln_{q}\left[\frac{p(x_{n+1})}{s(x_{n+1})}\right]\ln_{q}\left[\frac{p(x_{n}|x_{n+1})}{s(x_{n}|x_{n+1})}\right] \end{equation}
\end{proof}

Notice that the term $T_{q}$ may be negative, allowing a decrease in entropy. 

\section{Maximum entropy method for the q-entropy}
The Maximum Entropy method was originally formulated by Edwin T. Jaynes \cite{Jay, Cover}. Its central idea is to select, among all probability distributions compatible with a given set of macroscopic constraints (such as known expectation values). In the classical scenario, the Shannon entropy is employed, and its maximization naturally leads to the familiar exponential distributions of the statistical mechanics developed by Boltzmann and Gibbs \cite{Jay}.
In this context, we can apply the maximum entropy method \cite{Jay, Cover} for $q$- entropy with the constraints:
\begin{equation}
\sum_{i=1}^{n} p_{i}\epsilon_{i}=\widehat{\epsilon}
\end{equation}
and
\begin{equation}
\sum_{i=1}^{n}p_{i}=1
\end{equation}
The lagrangian is given by
\begin{equation}
L=-\sum_{i=1}^{n}p_{i}\ln_{q}p_{i}-(\lambda -1)(\sum_{i=1}^{n}p_{i}-1)- \mu(\sum_{i=1}^{n} p_{i}\epsilon_{i}-\widehat{\epsilon})
\end{equation}
Thus we have the associated probability distribution
\begin{equation}
p_{i}=exp_{q}\left(\frac{- \lambda-\mu\epsilon_{i}}{2-q}\right)
\end{equation}
The parameters $\lambda$ and $\mu$ can be obtained through system of nonlinear equations

\begin{displaymath}
y = \left\{ \begin{array}{ll}
\sum_{i=1}^{n}exp_{q}\left(\frac{- \lambda-\mu\epsilon_{i}}{2-q}\right)\epsilon_{i}=\widehat{\epsilon} & \\
\sum_{i=1}^{n}exp_{q}\left(\frac{- \lambda-\mu\epsilon_{i}}{2-q}\right)=1 &
\end{array} \right.
\end{displaymath}

It remains for us to prove that we have the maximum entropy distribution. Let
\begin{equation}
H_{q}'=-\sum_{i=1}^{n}f_{i}\ln_{q}f_{i}
\end{equation}
be another probability arbitrary distribution. We have that,
\begin{eqnarray}
H_{q}-H_{q}'&=&-\sum_{i=1}^{n}p_{i}\ln_{q}p_{i}+\sum_{i=1}^{n}f_{i}\ln_{q}f_{i} \nonumber \\
&=&(f_{i}-p_{i})\left(\frac{-\lambda-\mu \epsilon_{i}}{2-q}\right)+\sum_{i=1}^{n}[1+(1-q)\ln_{q}p_{i}]f_{i}\ln_{q}\left(\frac{f_{i}}{p_{i}}\right) \nonumber \\
\end{eqnarray}
Then
\begin{eqnarray}
H_{q}-H_{q}'&=&-\sum_{i=1}^{n}[1+(1-q)\ln_{q}p_{i}]f_{i}\ln_{q}\left(\frac{f_{i}}{p_{i}}\right) \nonumber \\
&\geq& \sum_{i=1}^{n}f_{i}\ln_{q}\left(\frac{\sum_{i=1}^{n}f_{i}}{\sum_{i=1}^{n}p_{i}}\right) \nonumber \\
&=&0,
\end{eqnarray}
by using $q$-ln sum inequality (Lemma 2). Besides, we have the equality  if and only if $f_{i}=p_{i}$.

\section{q-Version Shannon-McMillan-Breiman theorem}
In this section, we present a version of Shannon-McMillan-Breiman theorem \cite{Cover} within the nonadditive entropic scenario considered in this work. The result characterizes the asymptotic behavior of information associated with stationary stochastic processes and establishes a $q$--version of the asymptotic equipartition property for general ergodic process. In this section, we consider $1/2<q<1$, corresponding to systems in which rare microstates are moderately amplified but do not dominate, consistent with ergodic behavior and effective additivity at large scales. Some basic probabilistic concepts relevant to this section are present in Appendix A

\begin{lemma}\label{lem3}
Let $\{X_n\}_{n\in\mathbb{Z}}$ be a stationary ergodic stochastic process over a finite alphabet $\mathcal{X}$
and $1/2q<1$.
Define the non-extensive conditional entropy rate:
\begin{eqnarray}
H_{q,\infty} := \mathbb{E}\big[-\ln_q p(X_0 \mid X_{-\infty}^{-1})\big],
\end{eqnarray}
where the expectation is over the joint distribution of the past $X_{-\infty}^{-1}$ and the present $X_0$.
Using the rule of probability conditioning,
\begin{eqnarray}
p(x_{-\infty}^{-1}, x_0) = p(x_{-\infty}^{-1}) \, p(x_0 \mid x_{-\infty}^{-1}),
\end{eqnarray}
we can expand $H_{q,\infty}$ as a double sum over the alphabet $\mathcal{X}$ and past sequences:
\begin{eqnarray}
H_{q,\infty}
= - \sum_{x_{-\infty}^{-1}} p(x_{-\infty}^{-1}) \sum_{x_0 \in \mathcal{X}} p(x_0 \mid x_{-\infty}^{-1}) 
\frac{p(x_0 \mid x_{-\infty}^{-1})^{1-q}-1}{1-q}.
\end{eqnarray}
Then, almost surely,
\begin{eqnarray}
\limsup_{n\to\infty} \Big(-\frac{1}{n} \ln_q p(X_0^{\,n-1}\vert X_{-\infty}^{-1})\Big) \le H_{q,\infty}.
\end{eqnarray}

\end{lemma}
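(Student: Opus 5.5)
The plan is to avoid any ergodic-theoretic machinery. For $q<1$ the $q$-logarithm is bounded below on $[0,1]$, so the quantity inside the $\limsup$ is $O(1/n)$ deterministically. The statement then reduces to the nonnegativity of $H_{q,\infty}$.

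\textbf{Step 1 (uniform bound).} For $t\in(0,1]$ and $q<1$ we have $-\ln_q t=\frac{1-t^{1-q}}{1-q}$, and since $t^{1-q}\ge 0$,
\begin{equation}
0\le -\ln_q t\le \frac{1}{1-q}.
\end{equation}
The lower bound is exactly the sign argument in the proof of Proposition 1. The same formula also gives the finite value $1/(1-q)$ at $t=0$, so there is no need to argue that the conditional probability is almost surely positive. Applying this pointwise with $t=p(X_0^{\,n-1}\vert X_{-\infty}^{-1})\in[0,1]$, which is a single number however many symbols it involves, gives for every $n$ and every sample path
\begin{equation}
-\frac{1}{n}\ln_q p(X_0^{\,n-1}\vert X_{-\infty}^{-1})\le \frac{1}{n(1-q)}.
\end{equation}

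\textbf{Step 2 (limit).} Letting $n\to\infty$ in Step 1 shows that the $\limsup$ is at most $0$ surely, not merely almost surely. The restriction $1/2<q<1$ is used only through $1-q>0$.

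\textbf{Step 3 (nonnegativity of $H_{q,\infty}$).} In the double-sum expansion given in the statement, each inner term $-p(x_0\mid x_{-\infty}^{-1})\ln_q p(x_0\mid x_{-\infty}^{-1})$ is nonnegative by Step 1. Each outer weight $p(x_{-\infty}^{-1})$ is also nonnegative. Hence $H_{q,\infty}\ge 0$, by the same reasoning as $H_q\ge 0$ in Proposition 1. This step also covers the case where the sum over pasts is interpreted as an integral against the law of the past, which is the rigorous reading for infinite pasts, since nonnegativity of the integrand suffices. Combining Steps 2 and 3 gives
\begin{equation}
\limsup_{n\to\infty}\Big(-\frac{1}{n}\ln_q p(X_0^{\,n-1}\vert X_{-\infty}^{-1})\Big)\le 0\le H_{q,\infty}
\end{equation}
almost surely.

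\textbf{Main obstacle.} The step that most needs care is interpretive rather than technical. A naive approach would imitate the classical Shannon--McMillan--Breiman argument: write $p(X_0^{\,n-1}\vert X_{-\infty}^{-1})$ as a product of one-step conditionals, split the $q$-logarithm, and apply Birkhoff's ergodic theorem. That route does not work directly, because $\ln_q$ is not additive. By (\ref{somalnq}), splitting the product produces cross terms $(1-q)\ln_q(\cdot)\ln_q(\cdot)$ that do not telescope.

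The boundedness observation in Step 1 bypasses this entirely. It also makes clear that, as stated, the lemma carries no ergodic content. If a sharper, non-degenerate statement were intended, the natural route would be different. One would apply the ergodic theorem to the additive sum $-\frac1n\sum_{k=0}^{n-1}\ln_q p(X_k\vert X_{-\infty}^{k-1})$, which converges almost surely to $H_{q,\infty}$ by stationarity and ergodicity. One would then compare it with the full $q$-logarithm using Lemma \ref{lemb}-type inequalities for $q<1$. For the bound as stated, Steps 1--3 suffice.
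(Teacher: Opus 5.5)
Your proof is correct, and it takes a genuinely different route from the paper's.

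The paper sets $Z_i=-\ln_q p(X_i\mid X_{-\infty}^{i-1})$. It then uses the chain rule and the identity (\ref{somalnq}) to get $-\ln_q p(X_0^{n-1}\mid X_{-\infty}^{-1})\le\sum_{i=0}^{n-1}Z_i$. Finally it applies Birkhoff's ergodic theorem, so the right-hand side divided by $n$ converges a.s.\ to $H_{q,\infty}$.

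You instead use only the uniform bound $0\le-\ln_q t\le 1/(1-q)$ on $[0,1]$ for $q<1$. This needs no stationarity, no ergodicity and no chain rule, and it holds on every sample path. It also gives a stronger conclusion: $\limsup\le 0\le H_{q,\infty}$, and in fact the limit is $0$. That shows the lemma as stated has no ergodic content.

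One correction to your closing paragraph: the product-splitting route does work directly here. For $q<1$ and arguments in $(0,1]$, both $\ln_q x$ and $\ln_q y$ are nonpositive, so the cross term $(1-q)\ln_q x\,\ln_q y$ is nonnegative. Hence $-\ln_q$ is subadditive over products, and that one-sided inequality is all a $\limsup$ bound needs. This is exactly the paper's argument, and it yields precisely the stated bound, not a sharper one; your bound is the sharper of the two.

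Finally, your Step 1 applies verbatim to $p(X_0^{n-1})$, so $-\frac1n\ln_q p(X_0^{n-1})\to 0$ on every path. Consequently the convergence to $H_{q,\infty}$ asserted in Theorem \ref{tq} can hold only when $H_{q,\infty}=0$.
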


\begin{proof}
Let
\begin{eqnarray}
Z_i := -\ln_q p(X_i \mid X_{-\infty}^{\,i-1}).
\end{eqnarray}
Since $\{X_i\}$ is stationary and ergodic, $\{Z_i\}$ is also stationary and ergodic.  
By the ergodic theorem \cite{Cover}:
\begin{eqnarray} \label{erg}
\frac{1}{n} \sum_{i=0}^{n-1} Z_i \;\xrightarrow{\text{a.s.}}\; \mathbb{E}[Z_0] = H_{q,\infty}.
\end{eqnarray}
We have 
\begin{eqnarray}
\ln_q(xy)=\ln_q(x)+\ln_q(y)+(1-q)\ln_q(x)\ln_q(y),
\end{eqnarray}
so that for any $n$, with $0 \leq q <1$,
\begin{equation}
-\ln_q p(X_0^{n-1}\vert X_{-1}^{\infty}) = -\ln_q \prod_{i=0}^{n-1}
p\bigl(X_i \mid X_{-\infty}^{\,i-1}\bigr) \le \sum_{i=0}^{n-1} Z_i
\end{equation}
Dividing by $n$:
\begin{eqnarray}
-\frac{1}{n} \ln_q p(X_0^{n-1}\vert X_{-\infty}^{-1}) \le \frac{1}{n} \sum_{i=0}^{n-1} Z_i.
\end{eqnarray}
From (\ref{erg}), taking $\limsup$ gives
\begin{eqnarray}
\limsup_{n \to \infty} \Big(-\frac{1}{n} \ln_q p(X_0^{n-1}\vert X_{-\infty}^{-1})\Big) \le H_{q,\infty}.
\end{eqnarray}
Using the expanded form of $H_{q,\infty}$:
\begin{eqnarray}
\limsup_{n \to \infty} \Big(-\frac{1}{n} \ln_q p(X_0^{n-1}\vert X_{-\infty}^{-1})\Big) \le 
- \sum_{x_{-\infty}^{-1}} p(x_{-\infty}^{-1}) \sum_{x_0 \in \mathcal{X}} p(x_0 \mid x_{-\infty}^{-1}) \frac{p(x_0 \mid x_{-\infty}^{-1})^{1-q}-1}{1-q}.
\end{eqnarray}
Here the first $p$ comes from the distribution of the past, and the second $p$ comes from the conditional probability of $X_0$ given the past.
\end{proof}

The $k$th-order Markov approximation to the probability is defined for $n \ge k$ as \cite{Cover}
\begin{eqnarray}
p_k(X_0^{n-1}) = p(X_0^{k-1}) \prod_{i=k}^{n-1} p\bigl(X_i \mid X_{i-k}^{i-1}\bigr).
\end{eqnarray}

\begin{lemma}\label{lem4}
Let $\{X_n\}_{n\in\mathbb{Z}}$ be a stationary stochastic process with finite alphabet. 
Assume in addition that there exists  constants $C>0$ and $C'>0$ such that 
\begin{eqnarray}\label{sup1}
\sup_{n \geq 1}\mathbb{E}
\left[
\frac{p(X_0^{n-1}\mid X_{-\infty}^{-1})}
     {p(X_0^{n-1})} 
\right]<C
\end{eqnarray}
and
\begin{eqnarray}\label{sup2}
\sup_{n \geq 1}\mathbb{E}
\left[
\frac{p(X_0^{n-1})}
     {p_k(X_0^{n-1})}
\right]<C'
\end{eqnarray}
Then, almost surely,
\begin{eqnarray}
\limsup_{n\to\infty}
\frac{1}{n}
\ln_q
\frac{p(X_0^{n-1}\mid X_{-\infty}^{-1})}
     {p(X_0^{n-1})}
\le 0,
\end{eqnarray}
\begin{eqnarray}
\limsup_{n\to\infty}
\frac{1}{n}
\ln_q
\frac{p(X_0^{n-1})}
     {p_k(X_0^{n-1})}
\le 0.
\end{eqnarray}
\end{lemma}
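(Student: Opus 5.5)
We follow the classical argument of Cover and Thomas (Markov inequality plus Borel--Cantelli), adapting it to the $q$-logarithm. Set $W_n := p(X_0^{n-1}\mid X_{-\infty}^{-1})/p(X_0^{n-1})$; the case $W'_n := p(X_0^{n-1})/p_k(X_0^{n-1})$ is identical, with $C'$ in place of $C$. By (\ref{sup1}) we have $\mathbb{E}[W_n]<C$ for every $n$.

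First we bound the tails of $W_n$. For each $n$, Markov's inequality gives
\begin{eqnarray}
\Pr\left(W_n \ge n^2 C\right) \le \frac{\mathbb{E}[W_n]}{n^2 C} < \frac{1}{n^2}.
\end{eqnarray}
The series $\sum_n n^{-2}$ converges, so Borel--Cantelli says that almost surely $W_n < n^2 C$ for all sufficiently large $n$.

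Next we convert this into a bound on $\frac{1}{n}\ln_q W_n$. Since $\ln_q$ is increasing, eventually
\begin{eqnarray}
\frac{1}{n}\ln_q W_n \le \frac{1}{n}\ln_q(n^2 C) = \frac{(n^2C)^{1-q}-1}{n(1-q)} .
\end{eqnarray}
The right-hand side behaves like $n^{2(1-q)-1}C^{1-q}/(1-q)$. It tends to $0$ exactly when $2(1-q)<1$, that is, when $q>1/2$. This is where the standing hypothesis $1/2<q<1$ enters. Taking $\limsup$ then gives the first claimed inequality almost surely. Repeating the argument with $W'_n$ and (\ref{sup2}) gives the second.

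The main obstacle is this growth step. Unlike the ordinary logarithm, $\ln_q$ of a polynomial does not grow logarithmically; it grows polynomially. We therefore need the Markov threshold sequence $a_n$ to satisfy two conditions at once: $\sum_n 1/a_n<\infty$, and $a_n^{1-q}/n \to 0$. With $a_n=n^2$ this works only for $q>1/2$. To extend the range we could try $a_n = n^{1+\delta}$, which requires $(1+\delta)(1-q)<1$, i.e.\ any $q>0$ for small $\delta$. For the stated range, however, the choice $n^2$ suffices, and it matches the restriction $1/2<q<1$ imposed in this section.
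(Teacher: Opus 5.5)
Your proof is correct and is essentially the paper's argument: Markov's inequality with a threshold of order $n^2$, Borel--Cantelli, monotonicity of $\ln_q$, and the fact that $\frac{1}{n}\ln_q(n^2)\to 0$ because $q>1/2$. Putting the constant $C$ into the threshold rather than into the probability bound is purely cosmetic, and your closing remark is also correct: a threshold such as $n^{1+\delta}$ would extend the conclusion to every $q\in(0,1)$, which shows that the restriction $q>1/2$ comes only from the choice $t_n=n^2$.
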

\begin{proof}

By Markov inequality \cite{Durret}, for any $t_n>0$,
\begin{eqnarray}
\Pr\left\{
\frac{p(X_0^{n-1}\mid X_{-\infty}^{-1})}
     {p(X_0^{n-1})}
\ge t_n
\right\}
&\le&
\frac{C}{t_n},
\end{eqnarray}
\begin{eqnarray}
\Pr\left\{
\frac{p(X_0^{n-1})}
     {p_k(X_0^{n-1})}
\ge t_n
\right\}
&\le&
\frac{C'}{t_n}.
\end{eqnarray}
Since $\frac{d}{dx}\ln_q(x)>0$ for $x>0$, we obtain
\begin{eqnarray}
\Pr\left\{
\frac{1}{n}
\ln_q
\frac{p(X_0^{n-1}\mid X_{-\infty}^{-1})}
     {p(X_0^{n-1})}
\ge
\frac{1}{n}\ln_q t_n
\right\}
&\le&
\frac{C}{t_n},
\end{eqnarray}
\begin{eqnarray}
\Pr\left\{
\frac{1}{n}
\ln_q
\frac{p(X_0^{n-1})}
     {p_k(X_0^{n-1})}
\ge
\frac{1}{n}\ln_q t_n
\right\}
&\le&
\frac{C'}{t_n}.
\end{eqnarray}
Choose $t_n = n^2$. Then $\sum_{n=1}^{\infty} 1/t_n < \infty$ and for $q>1/2$,
\begin{eqnarray}
\ln_q(t_n)
=
\ln_q(n^2)
=
\frac{n^{2(1-q)}-1}{1-q},
\end{eqnarray}
\begin{eqnarray}
\frac{1}{n}\ln_q(t_n) \longrightarrow 0.
\end{eqnarray}
By the Borel-Cantelli lemma \cite{Durret},
\begin{eqnarray}
\limsup_{n\to\infty}
\frac{1}{n}
\ln_q
\frac{p(X_0^{n-1}\mid X_{-\infty}^{-1})}
     {p(X_0^{n-1})}
\le 0 \quad a.s.,
\end{eqnarray}
and
\begin{eqnarray}
\limsup_{n\to\infty}
\frac{1}{n}
\ln_q
\frac{p(X_0^{n-1})}
     {p_k(X_0^{n-1})}
\le 0 \quad a.s.
\end{eqnarray}
\end{proof}

The supremum conditions (\ref{sup1}) and (\ref{sup2}) provide uniform control over block probability ratios and may be satisfied in stationary processes with memory decay, allowing the asymptotic analysis.

\begin{lemma}\label{lem6}
Let $\{X_n\}_{n\in\mathbb{Z}}$ be a stationary ergodic stochastic process over a finite alphabet. Fix $k \ge 1$ and define
\begin{eqnarray}
p_k(X_0^{n-1}) = p(X_0^{k-1}) \prod_{i=k}^{n-1} p(X_i \mid X_{i-1}, \dots, X_{i-k}).
\end{eqnarray}
For $ \frac{1}{2} < q < 1$, define
\begin{eqnarray}
Z_i := -\ln_q p(X_i \mid X_{i-1}, \dots, X_{i-k}),
\end{eqnarray}
and
\begin{eqnarray}
H_{q,k} := \mathbb{E}[Z_0].
\end{eqnarray}
Define $T_3$ as the sum of all interaction terms of order greater than or equal to two arising from the iterative application of
\begin{eqnarray}
\ln_q(xy) = \ln_q(x) + \ln_q(y) + (1-q)\ln_q(x)\ln_q(y),
\end{eqnarray}
to the product $\prod_{i=k}^{n-1} p(X_i \mid X_{i-1}, \dots, X_{i-k})$, i.e.,
\begin{eqnarray}
\ln_q\left(\prod_{i=k}^{n-1} p_i\right)
=
\sum_{i=k}^{n-1} \ln_q(p_i) + T_3.
\end{eqnarray}
Assume that
\begin{eqnarray}
\frac{1}{n} T_3 \xrightarrow{a.s.} 0.
\end{eqnarray}
Then, almost surely,
\begin{eqnarray}
\liminf_{n\to\infty}
-\frac{1}{n}\ln_q p_k(X_0^{n-1})
\ge
H_{q,k}.
\end{eqnarray}
\end{lemma}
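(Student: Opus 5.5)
We decompose $p_k(X_0^{n-1})$ into its initial block $p(X_0^{k-1})$ and the product $\prod_{i=k}^{n-1}p_i$, with $p_i:=p(X_i\mid X_{i-1},\dots,X_{i-k})$. We then use the $q$-product rule (\ref{somalnq}) once more to separate the two factors, and apply the ergodic theorem to the additive part.

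\textbf{Step 1 (splitting off the initial block).} Write $A:=p(X_0^{k-1})$ and $B:=\prod_{i=k}^{n-1}p_i$. By (\ref{somalnq}),
\begin{equation}
-\ln_q p_k(X_0^{n-1}) = -\ln_q A-\ln_q B-(1-q)\ln_q A\,\ln_q B .
\end{equation}
Since $0<A,B\le 1$, the proof of Proposition 1 gives $\ln_q A\le 0$ and $\ln_q B\le 0$. With $1/2<q<1$ the cross term $-(1-q)\ln_q A\ln_q B$ is therefore $\le 0$, so we cannot simply drop it. We control it instead. Because $q<1$, $\ln_q$ is bounded below on $(0,1]$ by $-1/(1-q)$. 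Hence $0\le -\ln_q A\le 1/(1-q)$ and
\begin{equation}
-(1-q)\ln_q A\ln_q B \ge \ln_q A\cdot\frac{1}{1-q}\cdot(1-q)\cdot(-1)\,,
\end{equation}
which means the cross term is bounded in absolute value by a constant independent of $n$. The term $-\ln_q A$ is bounded for the same reason. Dividing by $n$, both terms vanish as $n\to\infty$.

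\textbf{Step 2 (main part via the assumption on $T_3$).} By the definition of $T_3$,
\begin{equation}
-\ln_q B=\sum_{i=k}^{n-1}Z_i - T_3 .
\end{equation}
Each $Z_i$ is a fixed function of the finite window $(X_{i-k},\dots,X_i)$, so $\{Z_i\}$ is stationary and ergodic. It is also bounded, since $0\le Z_i\le 1/(1-q)$, and hence integrable. The ergodic theorem, as used in the proof of Lemma \ref{lem3}, then gives
\begin{equation}
\frac1n\sum_{i=k}^{n-1}Z_i\to \mathbb E[Z_0]=H_{q,k}\quad\text{a.s.}
\end{equation}
The missing $k$ initial terms are irrelevant in the limit. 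Using the hypothesis $T_3/n\to 0$ a.s., we get $-\frac1n\ln_q B\to H_{q,k}$ almost surely.

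\textbf{Step 3 (conclusion).} Combining Steps 1 and 2, $-\frac1n\ln_q p_k(X_0^{n-1})$ converges almost surely to $H_{q,k}$. In particular its $\liminf$ is $\ge H_{q,k}$, which is the claim.

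The main obstacle is conceptual rather than computational. Since $\ln_q$ is bounded on $(0,1]$ when $q<1$, the quantity $-\frac1n\ln_q p_k$ is itself bounded by $1/(n(1-q))$ and tends to $0$. So all of the content of the statement lies in the hypothesis on $T_3$, which must absorb $-\sum Z_i$ up to $o(n)$. I would therefore make sure every step uses that hypothesis exactly as stated, and check that the boundedness estimates on the initial block are uniform in $n$.
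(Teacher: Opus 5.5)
Your proof is correct and takes the same route as the paper: split $-\ln_q p_k(X_0^{n-1})$ into the initial-block term, $\sum_{i=k}^{n-1} Z_i$, and $T_3$, then apply the ergodic theorem to the sum and the hypothesis to $T_3$. It is in fact more careful than the paper, which writes $-\ln_q p_k = T_1+T_2+T_3$ without the cross term $-(1-q)\ln_q A\,\ln_q B$ and calls the random variable $T_1=-\ln_q p(X_0^{k-1})$ constant, whereas you bound both uniformly in $n$.

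The only slip is the displayed estimate in Step 1: its right-hand side simplifies to $-\ln_q A\ge 0$, yet the cross term is $\le 0$. It should read $-(1-q)\ln_q A\,\ln_q B\ge \ln_q A> -1/(1-q)$, which still gives the uniform bound you need. Your closing remark is also sound: combined with Step 2, it shows that the hypothesis $T_3/n\to 0$ can only hold when $H_{q,k}=0$.
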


\begin{proof}
We write
\begin{eqnarray}
p_k(X_0^{n-1})
=
p(X_0^{k-1}) \prod_{i=k}^{n-1} p_i,
\end{eqnarray}
where $p_i := p(X_i \mid X_{i-1}^{i-k})$. Applying iteratively the identity
\begin{eqnarray}
\ln_q(xy) = \ln_q(x) + \ln_q(y) + (1-q)\ln_q(x)\ln_q(y),
\end{eqnarray}
we obtain the decomposition
\begin{eqnarray}
-\ln_q p_k(X_0^{n-1}) = T_1 + T_2 + T_3,
\end{eqnarray}
where
\begin{eqnarray}
T_1 := -\ln_q p(X_0^{k-1}),
\end{eqnarray}
\begin{eqnarray}
T_2 := \sum_{i=k}^{n-1} Z_i.
\end{eqnarray}
Since $0 < p_i \le 1$ and $q<1$, we have $\ln_q(p_i) \le 0$, hence
\begin{eqnarray}
Z_i = -\ln_q(p_i) \ge 0.
\end{eqnarray}
Thus,
\begin{eqnarray}
-\ln_q p_k(X_0^{n-1})
=
T_1 + T_2 + T_3
\ge
T_2 - |T_1| + T_3.
\end{eqnarray}
Dividing by $n$, we obtain
\begin{eqnarray}
-\frac{1}{n}\ln_q p_k(X_0^{n-1})
\ge
\frac{1}{n}T_2 - \frac{|T_1|}{n} + \frac{1}{n}T_3.
\end{eqnarray}
Since $T_1$ is constant, we have
\begin{eqnarray}
\frac{|T_1|}{n} \to 0.
\end{eqnarray}
By assumption,
\begin{eqnarray}
\frac{1}{n}T_3 \to 0 \quad \text{a.s.}
\end{eqnarray}
Taking the limit inferior, it follows that
\begin{eqnarray}
\liminf_{n\to\infty}
-\frac{1}{n}\ln_q p_k(X_0^{n-1})
\ge
\lim_{n\to\infty} \frac{1}{n}T_2.
\end{eqnarray}
By the ergodic theorem, since $\{Z_i\}$ is stationary and ergodic,
\begin{eqnarray}
\frac{1}{n}T_2
=
\frac{1}{n}\sum_{i=k}^{n-1} Z_i
\;\xrightarrow{a.s.}\;
\mathbb{E}[Z_0] = H_{q,k}.
\end{eqnarray}
Hence,
\begin{eqnarray}
\liminf_{n\to\infty}
-\frac{1}{n}\ln_q p_k(X_0^{n-1})
\ge
H_{q,k}.
\end{eqnarray}
\end{proof}

The assumption $T_{3}$ ensures that the higher-order interaction terms arising from the nonadditivity of the $q$-logarithm remain negligible at the macroscopic scale. This condition can be interpreted as a weak-interaction requirement and is satisfied in scenarios where correlations are sufficiently controlled.

\begin{lemma}\label{lem5}
Let $\{X_n\}_{n\in\mathbb Z}$ be a bilateral stochastic process with finite alphabet $\mathcal X$.
For each $k\ge1$, define the order-$k$ conditional Tsallis entropy by
\begin{eqnarray}
H_{q,k} := H_q(X_0 \mid X_{-1},\ldots,X_{-k}).
\end{eqnarray}
We write
\begin{eqnarray}
H_{q,k} \searrow H_{q,\infty}
\end{eqnarray}
to indicate that $(H_{q,k})$ is monotone decreasing and convergent.
Then, for $q\le 2$,
\begin{eqnarray}
H_{q,k} \searrow H_{q,\infty}
= H_q(X_0 \mid X_{-\infty}^{-1}).
\end{eqnarray}
\end{lemma}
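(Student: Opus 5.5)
The approach has two parts: first show that conditioning reduces $q$-entropy, which gives monotonicity of $(H_{q,k})$; then identify the limit through martingale convergence of the conditional probabilities. Nonnegativity, from Proposition 1, gives a lower bound $0$, so the sequence will converge.

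\textbf{Step 1 (conditioning reduces $q$-entropy).} I will prove that for any $Y$ and any extra conditioning $W$, $H_q(X_0\mid Y,W)\le H_q(X_0\mid Y)$ for $q\le 2$. Concavity of $\phi(t)=-t\ln_q t$ on $[0,1]$ follows from the convexity computation used for the $q$-ln sum inequality (Lemma 2): $f''(u)=(2-q)u^{-q}\ge 0$ for $q\le2$. Write
\[
H_q(X_0\mid Y,W)=\sum_{y}p(y)\sum_{w}p(w\mid y)\sum_{x_0}\phi\bigl(p(x_0\mid y,w)\bigr),
\]
and apply Jensen in $w$ for fixed $y,x_0$. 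Since $\sum_w p(w\mid y)p(x_0\mid y,w)=p(x_0\mid y)$, the inner average is at most $\phi(p(x_0\mid y))$, which yields $H_q(X_0\mid Y)$. Taking $Y=(X_{-1},\dots,X_{-k})$ and $W=X_{-k-1}$ gives $H_{q,k+1}\le H_{q,k}$. The same argument with $W=X_{-\infty}^{-k-1}$ gives $H_q(X_0\mid X_{-\infty}^{-1})\le H_{q,k}$ for every $k$. Hence $H_{q,k}$ decreases to some limit $L\ge H_q(X_0\mid X_{-\infty}^{-1})$. By stationarity, this conditional entropy is the quantity $H_{q,\infty}$ of Lemma \ref{lem3}.

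\textbf{Step 2 (identifying the limit).} For each fixed $x_0\in\mathcal X$, the sequence $M_k:=p(x_0\mid X_{-k}^{-1})=\mathbb E[\mathbf 1_{\{X_0=x_0\}}\mid \sigma(X_{-k}^{-1})]$ is a bounded martingale in $k$ with respect to the increasing filtration $\sigma(X_{-k}^{-1})$. By L\'evy's upward theorem, $M_k\to p(x_0\mid X_{-\infty}^{-1})$ almost surely. The function $\phi(t)=-t\ln_q t=\frac{t-t^{2-q}}{1-q}$ is continuous on $[0,1]$ for $q<2$ (and $\phi\equiv0$ in the degenerate case $q=2$), hence bounded there. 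Therefore
\[
H_{q,k}=\mathbb E\Bigl[\sum_{x_0}\phi(M_k(x_0))\Bigr]
\]
converges to $\mathbb E\bigl[\sum_{x_0}\phi(p(x_0\mid X_{-\infty}^{-1}))\bigr]=H_q(X_0\mid X_{-\infty}^{-1})$ by bounded convergence, since the alphabet is finite. This shows $L=H_{q,\infty}$.

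\textbf{Main obstacle.} The delicate point is the measure-theoretic meaning of conditioning on the infinite past and the justification of Jensen in Step 1 when $W$ is the infinite past. Both are resolved by working with the regular conditional probabilities $p(x_0\mid X_{-\infty}^{-1})$ and the tower property $\mathbb E[p(x_0\mid X_{-\infty}^{-1})\mid X_{-k}^{-1}]=p(x_0\mid X_{-k}^{-1})$, together with conditional Jensen for the concave bounded $\phi$. The case $q=1$ is read as the Shannon limit, and there the same argument applies verbatim.
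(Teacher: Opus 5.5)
For $q<2$ your proof is correct and is essentially the paper's argument. Concavity of $\phi(t)=-t\ln_q t$ plus Jensen gives $H_q(U\mid V,W)\le H_q(U\mid V)$, hence $H_{q,k+1}\le H_{q,k}$. L\'evy's upward theorem for $p(x_0\mid X_{-k}^{-1})$ plus bounded convergence then identifies the limit. Two small remarks: your extra bound $H_q(X_0\mid X_{-\infty}^{-1})\le H_{q,k}$ is redundant once Step 2 is done. Also, identifying $H_q(X_0\mid X_{-\infty}^{-1})$ with $\mathbb E[-\ln_q p(X_0\mid X_{-\infty}^{-1})]$ is just the tower property, not stationarity, which the lemma does not assume.

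The endpoint $q=2$ is where your argument breaks. There $\ln_2 t=1-t^{-1}$, so $\phi(t)=-t(1-t^{-1})=1-t$ on $(0,1]$, not $\phi\equiv 0$. The vanishing of $f''$ only says $\phi$ is affine on $(0,1]$, and $\phi(0^+)=1$.

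With the usual convention that zero-probability outcomes contribute $\phi(0)=0$, $\phi$ jumps at $0$. Then $\sum_{x_0}\phi(p(x_0\mid X_{-k}^{-1}))$ need not converge to $\sum_{x_0}\phi(p(x_0\mid X_{-\infty}^{-1}))$, and the conclusion genuinely fails. Here is a counterexample:
\begin{itemize}
\item Let $X_{-1},X_{-2},\dots$ be i.i.d.\ fair bits and $S=\sum_{j\ge1}X_{-j}2^{-j}$.
\item Let $A\subset[0,1]$ be a Borel set meeting every interval in positive but not full measure, and set $X_0=\mathbf 1_A(S)$.
\item Both values of $X_0$ have positive probability given any finite past, so $H_{2,k}=1$ for all $k$.
\item But $X_0$ is a function of the infinite past, so $H_2(X_0\mid X_{-\infty}^{-1})=0$.
\end{itemize}

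The paper's proof has the same defect: it asserts that $\phi$ is continuous on $[0,1]$ for all $q\le 2$. So $q=2$ must either be excluded or be treated with the continuous extension $\phi(0)=1$. Under that extension $H_2\equiv|\mathcal X|-1$, and the lemma is trivial.
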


\begin{proof}
For $p\in[0,1]$ and $q\le2$,
\begin{eqnarray}
\ln_q(p)\le0.
\end{eqnarray}
Hence
\begin{eqnarray}
H_{q,k}\ge0
\end{eqnarray}
for all $k$. For $q\le2$, define
\begin{eqnarray}
\phi(p)=-p\ln_q(p).
\end{eqnarray}
Then $\phi$ is concave on $[0,1]$. Let $U,V,W$ be finite-valued random variables.  
Fix $v$ in the range of $V$. By the law of total probability,
\begin{eqnarray}
p(u\mid V=v)
=
\sum_w p(u,W=w\mid V=v).
\end{eqnarray}
By the conditional product rule,
\begin{eqnarray}
p(u,W=w\mid V=v)
=
p(u\mid V=v,W=w)\, p(W=w\mid V=v).
\end{eqnarray}
Hence
\begin{eqnarray}
p(u\mid V=v)
=
\sum_w
p(u\mid V=v,W=w)\, p(W=w\mid V=v).
\end{eqnarray}
Thus $p(u\mid V=v)$ is a convex combination of the numbers
$p(u\mid V=v,W=w)$ with weights $p(W=w\mid V=v)$. Since $\phi$ is concave, Jensen's inequality gives
\begin{eqnarray}
\phi(p(u\mid V=v))
\ge
\sum_w
p(W=w\mid V=v)\,
\phi(p(u\mid V=v,W=w)).
\end{eqnarray}
Summing over $u$,
\begin{eqnarray}
\sum_u \phi(p(u\mid V=v))
\ge
\sum_w p(W=w\mid V=v)
\sum_u \phi(p(u\mid V=v,W=w)).
\end{eqnarray}
Taking expectation with respect to $V$ yields
\begin{eqnarray}
H_q(U\mid V)
\ge
H_q(U\mid V,W).
\end{eqnarray}
Hence
\begin{eqnarray}
H_q(U\mid V,W)\le H_q(U\mid V).
\end{eqnarray}
Applying this with
\begin{eqnarray}
U=X_0, \qquad
V=(X_{-1},\ldots,X_{-k}), \qquad
W=X_{-(k+1)},
\end{eqnarray}
we obtain
\begin{eqnarray}
H_{q,k+1}\le H_{q,k}.
\end{eqnarray}
Thus $(H_{q,k})$ is decreasing and bounded below by $0$, so the limit
\begin{eqnarray}
H_{q,\infty} := \lim_{k\to\infty} H_{q,k}
\end{eqnarray}
exists. Let
\begin{eqnarray}
\mathcal F_k = \sigma(X_{-1},\ldots,X_{-k}).
\end{eqnarray}
Then $(\mathcal F_k)$ is an increasing sequence of $\sigma$-algebras and
\begin{eqnarray}
\bigvee_{k\ge1}\mathcal F_k
=
\sigma(X_{-\infty}^{-1}).
\end{eqnarray}
For each fixed $x_0\in\mathcal X$,
\begin{eqnarray}
p(x_0\mid X_{-1}^{-k})
=
\mathbb E\!\left[
\mathbf 1_{\{X_0=x_0\}}
\mid
\mathcal F_k
\right].
\end{eqnarray}
By Lévy's martingale convergence theorem \cite{Cover},
\begin{eqnarray}
p(x_0\mid X_{-1}^{-k})
\xrightarrow[k\to\infty]{a.s.}
p(x_0\mid X_{-\infty}^{-1}).
\end{eqnarray}
Since $\phi$ is continuous on $[0,1]$ and bounded for $q\le2$,
there exists $M>0$ such that
\begin{eqnarray}
|\phi(p)|\le M
\end{eqnarray}
for all $p\in[0,1]$. Because $\mathcal X$ is finite,
\begin{eqnarray}
\left|
\sum_{x_0\in\mathcal X}
\phi(p(x_0\mid X_{-1}^{-k}))
\right|
\le
|\mathcal X|\,M,
\end{eqnarray}
which is integrable. Therefore, by the dominated convergence theorem \cite{Durret, Doob},
\begin{eqnarray}
\lim_{k\to\infty} H_{q,k}
=
\mathbb E\!\left[
- \sum_{x_0\in\mathcal X}
p(x_0 \mid X_{-\infty}^{-1})
\ln_q p(x_0 \mid X_{-\infty}^{-1})
\right].
\end{eqnarray}
By definition, the right-hand side equals
\begin{eqnarray}
H_q(X_0 \mid X_{-\infty}^{-1}),
\end{eqnarray}
and therefore
\begin{eqnarray}
H_{q,\infty}
=
H_q(X_0 \mid X_{-\infty}^{-1}).
\end{eqnarray}
\end{proof}

\begin{theorem}\label{tq}
Let $\{X_n\}_{n\in\mathbb{Z}}$ be a stationary ergodic stochastic process over a finite alphabet $\mathcal{X}$. Assume in addition that 
\begin{eqnarray}\label{c1}
p(X_0 \mid X_{-\infty}^{-1}) \geq  p(X_0^{n-1})
\end{eqnarray}
\begin{eqnarray}\label{c2}
p(X_0 ^{n-1}) \geq p_{k}(X_0^{n-1})
\end{eqnarray}
for all $n$. Assume also the conditions of Lemma \ref{lem4} are satisfied. Define the non-extensive conditional entropy rate
\[
H_{q,\infty} := \mathbb{E}\!\left[ - \ln_q p(X_0 \mid X_{-\infty}^{-1}) \right].
\]
Then, for $1/2<q<1$, almost surely,
\[
-\frac{1}{n} \ln_q p(X_0^{n-1}) \longrightarrow H_{q,\infty}.
\]
\end{theorem}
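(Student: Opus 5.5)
The argument follows the classical sandwich proof of Shannon--McMillan--Breiman in \cite{Cover}. It splits into an upper bound, via the $k$th-order Markov approximation $p_k$, and a lower bound, via the infinite-past conditional probability. The two are then glued together by Lemma \ref{lem5}.

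\textbf{Upper bound.} Write
\begin{eqnarray}
-\frac{1}{n}\ln_q p(X_0^{n-1}) = -\frac{1}{n}\ln_q\left[\frac{p(X_0^{n-1})}{p_k(X_0^{n-1})}\,p_k(X_0^{n-1})\right] \nonumber
\end{eqnarray}
and expand with the identity (\ref{somalnq}). This gives
\begin{eqnarray}
-\frac{1}{n}\ln_q p_k - \frac{1}{n}\ln_q\frac{p}{p_k} - \frac{1-q}{n}\ln_q\frac{p}{p_k}\,\ln_q p_k. \nonumber
\end{eqnarray}
By (\ref{c2}), $p/p_k\ge1$, so $\ln_q(p/p_k)\ge0$. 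Since also $\ln_q p_k\le0$ and $0<1-q$, the cross term $-(1-q)\ln_q(p/p_k)\ln_q p_k$ is nonnegative.

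The cross term therefore cannot simply be discarded for an upper bound. I would control it using $p=p_k\cdot(p/p_k)\le1$: for $q<1$ this gives $\ln_q p = \ln_q p_k + (p/p_k)^{1-q}\ln_q(p/p_k)\cdot$(rearranged). More directly, monotonicity of $\ln_q$ together with $p\ge p_k$ yields
\begin{eqnarray}
-\ln_q p(X_0^{n-1})\le -\ln_q p_k(X_0^{n-1}). \nonumber
\end{eqnarray}
Next, for $-\frac1n\ln_q p_k$, apply the same product expansion as in Lemma \ref{lem6}, now as a two-sided statement. One uses the ergodic theorem for $\frac1n\sum Z_i$, together with $\frac1n T_3\to0$ (assumed as in Lemma \ref{lem6}), to get $\lim -\frac1n\ln_q p_k = H_{q,k}$ almost surely. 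Hence $\limsup -\frac1n\ln_q p(X_0^{n-1})\le H_{q,k}$ for every $k$. Letting $k\to\infty$ and using Lemma \ref{lem5} ($H_{q,k}\searrow H_{q,\infty}$) gives the upper bound $H_{q,\infty}$.

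\textbf{Lower bound.} Symmetrically, expand
\begin{eqnarray}
p(X_0^{n-1}) = p(X_0^{n-1}\mid X_{-\infty}^{-1})\cdot\frac{p(X_0^{n-1})}{p(X_0^{n-1}\mid X_{-\infty}^{-1})}. \nonumber
\end{eqnarray}
Lemma \ref{lem4} gives $\limsup\frac1n\ln_q\frac{p(X_0^{n-1}\mid X_{-\infty}^{-1})}{p(X_0^{n-1})}\le0$, and the cross term is handled through (\ref{c1}) in the same sign-based way. This reduces the lower bound to the behaviour of $-\frac1n\ln_q p(X_0^{n-1}\mid X_{-\infty}^{-1})$.

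For that quantity, Lemma \ref{lem3} gives only $\limsup\le H_{q,\infty}$. For the matching $\liminf\ge H_{q,\infty}$, I would use that the expansion of $\ln_q\prod_i p(X_i\mid X_{-\infty}^{i-1})$ differs from $\sum_i\ln_q(\cdot)$ by interaction terms whose normalized sum vanishes almost surely. This is the same weak-interaction control as the $T_3$ hypothesis, and together with the ergodic theorem it identifies the limit as exactly $H_{q,\infty}$. Combining the two bounds gives almost-sure convergence to $H_{q,\infty}$.

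\textbf{Main obstacle.} The hardest step is the cross terms generated by the nonadditivity of $\ln_q$ at scale $n$. Products like $\frac1n\ln_q(a)\ln_q(b)$ are not automatically $o(1)$: $\ln_q$ of a block probability can grow like $e^{n(1-q)h}$ when $q<1$. The hypotheses (\ref{c1}), (\ref{c2}) fix only their signs, and whichever direction is not covered by sign must be absorbed through monotonicity of $\ln_q$ or the $T_3$-type assumption. I would try first to make every comparison a direct monotonicity statement ($p\ge p_k$ implies $-\ln_q p\le-\ln_q p_k$), avoiding the cross terms entirely.
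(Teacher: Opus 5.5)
There is a genuine gap, and no argument can close it: the conclusion fails whenever $H_{q,\infty}>0$, and the hypotheses do not exclude that case. The fact your plan misses is that for $q<1$ the $q$-logarithm is bounded on $[0,1]$:
\[
0\le -\ln_q P=\frac{1-P^{1-q}}{1-q}\le\frac{1}{1-q}\qquad (0\le P\le 1).
\]
So $-\frac1n\ln_q p(X_0^{n-1})$, $-\frac1n\ln_q p_k(X_0^{n-1})$ and $-\frac1n\ln_q p(X_0^{n-1}\mid X_{-\infty}^{-1})$ all tend to $0$ for every realization. Your ``main obstacle'' is therefore misdiagnosed. For $q<1$, block $q$-logarithms saturate at $1/(1-q)$ rather than growing; growth like $e^{n(q-1)h}$ is the $q>1$ behaviour. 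The factor $1/n$ then kills them.

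It follows that your step $\lim_n -\frac1n\ln_q p_k(X_0^{n-1})=H_{q,k}$ holds only if $H_{q,k}=0$. Likewise, your lower-bound step $\liminf_n -\frac1n\ln_q p(X_0^{n-1}\mid X_{-\infty}^{-1})\ge H_{q,\infty}$ holds only if $H_{q,\infty}=0$. The weak-interaction assumptions you use to obtain these steps are not mild. By the ergodic theorem $\frac1n\sum_i Z_i\to H_{q,k}$, while $\frac1n\ln_q\prod_i p_i\to0$. Hence the quantity $\frac1n T_3=\frac1n\ln_q\prod_i p_i+\frac1n\sum_i Z_i$ of Lemma \ref{lem6} converges almost surely to $H_{q,k}$, and the analogous infinite-past interaction term converges to $H_{q,\infty}$. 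Assuming these vanish is the same as assuming a zero entropy rate. Moreover, neither assumption is a hypothesis of the theorem, and the infinite-past version appears nowhere in the paper.

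For a concrete counterexample, let $\{X_n\}$ be i.i.d.\ uniform on $\{0,1\}$. It satisfies every stated hypothesis:
\begin{itemize}
\item it is stationary and ergodic;
\item $p(X_0\mid X_{-\infty}^{-1})=\frac12\ge 2^{-n}=p(X_0^{n-1})$, so (\ref{c1}) holds, also in the corrected form $p(X_0^{n-1}\mid X_{-\infty}^{-1})\ge p(X_0^{n-1})$;
\item $p_k=p$, so (\ref{c2}) holds with equality;
\item both ratios in Lemma \ref{lem4} are identically $1$.
\end{itemize}
Yet $-\frac1n\ln_q 2^{-n}=\frac{1-2^{-n(1-q)}}{n(1-q)}\to 0$, whereas $H_{q,\infty}=\frac{1-2^{q-1}}{1-q}>0$ for every $q\in(1/2,1)$.

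The paper's proof does not escape this. It runs the sandwich in the opposite orientation to yours: an upper bound through the infinite past via Lemmas \ref{lem3} and \ref{lem4}, and a lower bound through $p_k$ via Lemmas \ref{lem4} and \ref{lem6}. Its lower bound depends on the $T_3$ hypothesis of Lemma \ref{lem6}, which is absent from the theorem and, as shown above, forces $H_{q,k}=0$.

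Your classical orientation $H_{q,\infty}\le\liminf\le\limsup\le H_{q,k}$ is the internally consistent one. Still, with the $1/n$ normalization the result can only hold in the trivial case $H_{q,\infty}=0$. A meaningful $q$-version has to be reformulated. Note that $\frac1n\ln p(X_0^{n-1})=\frac{1}{n(1-q)}\ln(1+(1-q)\ln_q p(X_0^{n-1}))$, so the relevant asymptotics are already governed by the classical theorem.
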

\begin{proof}
Using the conditions (\ref{c1}) and (\ref{c2}), we have
\begin{eqnarray}
\ln_q p(X_0 \mid X_{-\infty}^{-1})-\ln_q p(X_0^{n-1}) \leq \ln_q \frac{p(X_0 \mid X_{-\infty}^{-1})}{p(X_0^{n-1})},
\end{eqnarray}
\begin{eqnarray}
\ln_q p(X_0 ^{n-1})-\ln_q p_{k}(X_0^{n-1}) \leq \ln_q \frac{p(X_0 ^{n-1})}{p_{k}(X_0^{n-1})}.
\end{eqnarray}
Therefore, by Lemmas \ref{lem3} and  \ref{lem4},
\begin{eqnarray}
\limsup_{n\to\infty} -\frac{1}{n} \ln_q p(X_0^{n-1}) 
\le \limsup_{n\to\infty} -\frac{1}{n} \ln_q {p(X_0^{n-1} \mid X_{-\infty}^{-1})}
\le H_{q,\infty}.
\end{eqnarray}
From Lemma \ref{lem4} and \ref{lem6},
\begin{eqnarray}
\liminf_{n\to\infty} -\frac{1}{n} \ln_q p(X_0^{n-1})
\geq \liminf_{n\to\infty} -\frac{1}{n} \ln_q p_{k}(X_0^{n-1}) \geq H_{q,k}
\end{eqnarray}
Combining,
\begin{eqnarray}
H_{q,k}
\le
\liminf_{n\to\infty} -\frac{1}{n} \ln_q p(X_0^{n-1})
\le
\limsup_{n\to\infty} -\frac{1}{n} \ln_q p(X_0^{n-1})
\le
H_{q,\infty}.
\end{eqnarray}
Hence, by Lemma \ref{lem5}, we obtain, almost surely,
\begin{eqnarray}
\lim_{n\to\infty} -\frac{1}{n} \ln_q p(X_0^{n-1}) = H_{q,\infty}.
\end{eqnarray}
\end{proof}

The conditions (\ref{c1}) and (\ref{c2}) impose a consistency requirement between infinite and finite conditioning and ensure a regular behavior of block probabilities.

Although the nonadditive parameter $q$ introduces additional technical difficulties, the result remains rigorously valid for $1/2 < q < 1$, provided that the assumptions of Theorem \ref{tq} are satisfied. As discussed in \cite{Spal}, the typical set can be identified with the set of accessible microstates, leading to an asymptotic equiprobability of these states. In the present work, this result is extended to the nonextensive framework, yielding a Tsallis-type generalization in which this equiprobability holds in a $q$-deformed context. The assumptions of Theorem \ref{tq} ensure this behavior and can be verified in concrete classes of stochastic processes, providing a suitable formulation for the analysis.

\section{Conclusions}
 In this work, we have introduced definitions and derived information-theoretic inequalities based on a modified Tsallis entropy, which we argue provides a more natural generalization of Shannon entropy. We defined the notions of joint $q$-entropy, conditional $q$-entropy, relative $q$-entropy, and conditional mutual $q$-information, following an approach different from that of Ref. \cite{Furuichi}, and established several inequalities analogous to those of classical information theory. These results are shown to hold, in general, for $0<q<1$. This restriction may be relaxed, leading to a distinct set of inequalities that no longer preserve a direct analogy with Shannon's context. Within this scenario, the information-theoretic results developed here were employed, in the context of Markov chains, to prove a version of the second law of thermodynamics. We also applied the formalism to the maximum entropy method. In addition, we proved a Tsallis version of the Shannon-McMillan-Breiman theorem for $1/2<q<1$ and discuss its implications in nonextensive statistical physics, in particular its application to the asymptotic equiprobability of accessible microstates. As perspectives for future work, potential applications in the context of fractals and multifractals may be explored through the information dimension $ lim_{\epsilon \rightarrow 0} \frac{-\langle ln_{q}p \rangle}{ln_{q} 1/\epsilon}$, where $\epsilon$ stands for the scaling factor.

 \appendix
\section{Probabilistic tools for finite alphabet sources}

In this appendix, we present fundamental probabilistic concepts that are frequently used in information theory for processes over finite alphabets. We focus on ergodic transformations, the ergodic theorem for stationary sources, and almost sure convergence results such as the Borel-Cantelli lemma and the convergence dominated theorem. These tools provide the theoretical foundation for results like the asymptotic equipartition property and the Shannon-McMillan-Breiman theorem. In the following, we closely follow the References \cite{Cover,Durret,Doob}.

Let $(\Omega, \mathcal{F}, \mathbb{P})$ be a probability space, and let $T: \Omega \to \Omega$ be a transformation such that
\[
\mathbb{P}(T A) = \mathbb{P}(A), \quad \forall A \in \mathcal{F}.
\]
Such a transformation is called \emph{measure-preserving}.  A measure-preserving transformation $T$ is \emph{ergodic} if every $T$-invariant event $A$ (i.e., $T A = A$) satisfies
\[
\mathbb{P}(A) \in \{0,1\}.
\]
For a stationary stochastic process $\{X_n\}_{n\ge 1}$ over a finite alphabet $\chi$, the shift operator
\[
T((X_1, X_2, X_3, \dots)) = (X_2, X_3, X_4, \dots)
\]
defines a measure-preserving transformation. The process is ergodic if this shift is ergodic.

\begin{theorem}[Birkhoff ergodic theorem]
Let $\{X_n\}_{n\ge 1}$ be a stationary ergodic source over a finite alphabet $\chi$, and let $T$ be the shift operator defined above. Then, for any integrable function $f$ defined on the source outputs (e.g., self-information of blocks of length $n$), the time average converges almost surely:
\[
\lim_{n \to \infty} \frac{1}{n} \sum_{i=0}^{n-1} f(T^i \mathbf{X}) = \mathbb{E}[f(\mathbf{X})] \quad \text{a.s.}
\]
\end{theorem}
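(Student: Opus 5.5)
The statement to prove is the Birkhoff ergodic theorem in the appendix, specialized to the shift on sequences over a finite alphabet. I would follow the classical maximal-inequality route. That route needs only measure preservation of $T$ and ergodicity, and it does not rely on anything about $q$-logarithms from earlier in the paper. First I would reduce to the case $\mathbb{E}[f]=0$ by replacing $f$ with $f-\mathbb{E}[f]$. Write $S_n=\sum_{i=0}^{n-1} f\circ T^i$. It then suffices to show
\[
\limsup_{n\to\infty} \frac{S_n}{n}\le 0 \quad \text{a.s.}
\]
Applying the same bound to $-f$ gives the matching lower bound on the $\liminf$, and the two together give a.s.\ convergence to $\mathbb{E}[f]$.

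The key step is the maximal ergodic lemma. Set $M_N=\max(0,S_1,\dots,S_N)$. Then
\[
\int_{\{M_N>0\}} f\, d\mathbb{P}\ge 0 .
\]
I would prove it with Garsia's short argument. For $1\le k\le N$ we have $S_k\le f+M_N\circ T$, hence $M_N\le f+M_N\circ T$ on $\{M_N>0\}$. Integrate over that set and use $\int M_N\circ T=\int M_N$, which holds because $T$ is measure preserving. Since $M_N\circ T\ge 0$, the lemma follows.

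Next, fix $\varepsilon>0$ and let
\[
A=\Bigl\{\limsup_n S_n/n>\varepsilon\Bigr\}.
\]
Because $S_n\circ T=S_{n+1}-f$, the quantity $\limsup_n S_n/n$ is $T$-invariant, so $A$ is an invariant event. Ergodicity then forces $\mathbb{P}(A)\in\{0,1\}$. Apply the maximal lemma to $g=(f-\varepsilon)\mathbf{1}_A$ and let $N\to\infty$. The sets $\{\sup_N M_N(g)>0\}$ increase to $A$, and $g$ is integrable, so dominated convergence gives
\[
\int_A (f-\varepsilon)\,d\mathbb{P}\ge 0 .
\]
If $\mathbb{P}(A)=1$, this reads $\mathbb{E}[f]-\varepsilon\ge 0$, which contradicts $\mathbb{E}[f]=0$. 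Hence $\mathbb{P}(A)=0$. Letting $\varepsilon\downarrow 0$ along a countable sequence gives the desired $\limsup$ bound.

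I expect the main obstacle to be the invariance of $A$ combined with the passage $N\to\infty$. The invariance requires checking that $f\circ T^n/n\to 0$ does not spoil the identity between $\limsup_n S_n/n$ and its shift, which holds because $f/n\to 0$ pointwise. In the limit passage one must note that $\mathbf{1}_A\circ T=\mathbf{1}_A$, so the Birkhoff sums of $g$ equal $\mathbf{1}_A(S_n-n\varepsilon)$. Finally, I would remark that for the intended uses in Lemmas \ref{lem3} and \ref{lem6}, the function $f=-\ln_q p(X_0\mid \text{past})$ is bounded when $q<1$. Integrability is therefore automatic there.
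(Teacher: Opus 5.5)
Your proof is correct. The paper, however, contains no proof of this statement to compare it with: the appendix only quotes Birkhoff's theorem as a standard tool, citing Cover--Thomas, Durrett and Doob.

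What you give is the classical Garsia maximal-lemma argument. It is essentially Durrett's proof, with the conditional expectation onto the invariant $\sigma$-field replaced by the $0$--$1$ law from ergodicity; that is exactly why $\mathbb{P}(A)=1$ yields the contradiction $0=\mathbb{E}[f]\ge\varepsilon$.

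Quoting keeps the appendix short. Your route makes it self-contained and shows precisely which hypotheses are used:
\begin{itemize}
\item integrability of $f$;
\item stationarity in the form $\mathbb{P}(T^{-1}B)=\mathbb{P}(B)$;
\item the $0$--$1$ law for events with $T^{-1}A=A$.
\end{itemize}
State these preimage forms explicitly. The appendix's forward-image definition $\mathbb{P}(TA)=\mathbb{P}(A)$ is actually false for the one-sided shift: for $A=\{x_1=a\}$, the image $TA$ is the whole space.

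Your argument never uses the specific form of $T$. So it also covers the bilateral processes indexed by $\mathbb{Z}$, which are where the paper actually applies the theorem (Lemmas \ref{lem3} and \ref{lem6}), even though the appendix states it only for $\{X_n\}_{n\ge1}$. Your closing remark that $0\le-\ln_q p\le 1/(1-q)$ for $q<1$ checks the integrability hypothesis there, which the paper never verifies.

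Two small clean-ups:
\begin{itemize}
\item With the identity $S_n\circ T=S_{n+1}-f$, the term to control is $f/n$, not $f\circ T^n/n$. You should also first modify $f$ on a null set so that it is finite everywhere; then $A$ is exactly invariant, and the change affects $S_n$ only on the null set $\bigcup_i T^{-i}N$.
\item It is the sets $\{M_N(g)>0\}$ that increase in $N$, and their union is $\{\sup_N M_N(g)>0\}=A$.
\end{itemize}
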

In particular, for the self-information of a block of length $n$,
\[
-\frac{1}{n} \log \Pr(X_1, X_2, \dots, X_n) \longrightarrow H(X) \quad \text{almost surely,}
\]
where $H(X)$ is the entropy rate of the stationary ergodic source.

The Borel-Cantelli lemma provides a criterion for the almost sure occurrence of infinitely many events.
\begin{lemma}[Borel-Cantelli lemma]
Let $\{A_n\}_{n \ge 1}$ be a sequence of events in $(\Omega, \mathcal{F}, \mathbb{P})$.
\begin{enumerate}
    \item If $\sum_{n=1}^{\infty} \mathbb{P}(A_n) < \infty$, then
    \[
    \mathbb{P}(\text{$A_n$ occurs infinitely often}) = 0.
    \]
    \item If the events $\{A_n\}$ are independent and $\sum_{n=1}^{\infty} \mathbb{P}(A_n) = \infty$, then
    \[
    \mathbb{P}(\text{$A_n$ occurs infinitely often}) = 1.
    \]
\end{enumerate}
\end{lemma}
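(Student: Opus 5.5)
This is the standard first Borel--Cantelli lemma, so I would prove only part 1 (the convergent-series case) by a direct measure-theoretic argument, and part 2 (the independent, divergent case) by bounding the complementary probability with a product that tends to zero.

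\emph{Part 1.} First I rewrite the event ``$A_n$ occurs infinitely often'' as
\[
\limsup_n A_n=\bigcap_{N\ge1}\bigcup_{n\ge N}A_n .
\]
For every $N$ this set lies inside $\bigcup_{n\ge N}A_n$. Monotonicity and countable subadditivity of $\mathbb{P}$ then give
\[
\mathbb{P}(\limsup_n A_n)\le \sum_{n\ge N}\mathbb{P}(A_n).
\]
The right-hand side is the tail of a convergent series, so it tends to $0$ as $N\to\infty$, and therefore the probability is zero.

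\emph{Part 2.} I would pass to complements: it suffices to show that for each fixed $N$,
\[
\mathbb{P}\Big(\bigcap_{n\ge N}A_n^c\Big)=0 .
\]
Then $\mathbb{P}\big(\bigcup_N\bigcap_{n\ge N}A_n^c\big)=0$ by countable subadditivity, which is exactly the statement that the $A_n$ occur infinitely often with probability one.

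To prove the displayed claim I fix $M>N$. Independence of the $A_n$ implies independence of their complements. Combining this with the elementary inequality $1-x\le e^{-x}$, I get
\[
\mathbb{P}\Big(\bigcap_{n=N}^{M}A_n^c\Big)=\prod_{n=N}^{M}\big(1-\mathbb{P}(A_n)\big)\le \exp\Big(-\sum_{n=N}^{M}\mathbb{P}(A_n)\Big).
\]
Since $\sum_n\mathbb{P}(A_n)=\infty$, the exponent tends to $-\infty$ as $M\to\infty$. Continuity from above of the measure along the decreasing sets $\bigcap_{n=N}^{M}A_n^c$ then gives $\mathbb{P}\big(\bigcap_{n\ge N}A_n^c\big)=0$.

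The only point needing care is the independence step. Here I would note that independence of a family of events is preserved when any subfamily is replaced by complements, which follows by inclusion--exclusion on finite subfamilies. Everything else is routine.
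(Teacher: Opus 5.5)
Your proof is correct and complete. Part 1 uses $\mathbb{P}(\limsup_n A_n)\le\sum_{n\ge N}\mathbb{P}(A_n)\to 0$, and part 2 combines independence of the complements, $1-x\le e^{-x}$, and continuity from above; these are the standard textbook arguments, and the paper gives no proof of its own here, only quoting the lemma as background from Durrett and Doob (some texts phrase part 1 instead as $\mathbb{E}\sum_n \mathbf{1}_{A_n}=\sum_n\mathbb{P}(A_n)<\infty$, which is equivalent).
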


An important tool from measure theory is the dominated convergence theorem, which allows the interchange of limits and expectations under certain conditions.
\begin{theorem}[dominated convergence theorem]
Let $\{f_n\}_{n \ge 1}$ be a sequence of integrable functions on $(\Omega, \mathcal{F}, \mathbb{P})$ such that $f_n \to f$ almost surely and there exists an integrable function $g$ with
\[
|f_n(\omega)| \le g(\omega) \quad \text{for all } n \ge 1, \text{ a.s. } \omega \in \Omega.
\]
Then $f$ is integrable and
\[
\lim_{n \to \infty} \mathbb{E}[f_n] = \mathbb{E}[f].
\]
\end{theorem}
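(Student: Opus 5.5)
We will prove the dominated convergence theorem as stated in the appendix, in the standard way: apply Fatou's lemma to two nonnegative sequences built from the dominating function. First, integrability of the limit. Since $f_n\to f$ almost surely and $|f_n|\le g$ almost surely for every $n$, passing to the limit pointwise gives $|f|\le g$ almost surely. Hence $\mathbb{E}|f|\le\mathbb{E}[g]<\infty$, so $f$ is integrable. Measurability of $f$ is inherited as an a.s. limit of measurable functions; we can modify $f$ on a null set if necessary.

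The key tool is Fatou's lemma: for nonnegative measurable $h_n$,
\begin{equation*}
\mathbb{E}\bigl[\liminf_{n} h_n\bigr]\le\liminf_{n}\mathbb{E}[h_n].
\end{equation*}
We take Fatou's lemma as a standard consequence of the monotone convergence theorem, citing \cite{Durret}. We apply it twice.
\begin{itemize}
\item With $h_n=g+f_n\ge0$ a.s., we have $\liminf_n h_n=g+f$ a.s. Fatou gives $\mathbb{E}[g]+\mathbb{E}[f]\le\mathbb{E}[g]+\liminf_n\mathbb{E}[f_n]$.
\item With $h_n=g-f_n\ge0$ a.s., Fatou gives $\mathbb{E}[g]-\mathbb{E}[f]\le\mathbb{E}[g]-\limsup_n\mathbb{E}[f_n]$.
\end{itemize}
Since $\mathbb{E}[g]$ is finite, we may cancel it from both inequalities. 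This yields
\begin{equation*}
\limsup_n\mathbb{E}[f_n]\le\mathbb{E}[f]\le\liminf_n\mathbb{E}[f_n],
\end{equation*}
hence $\lim_n\mathbb{E}[f_n]=\mathbb{E}[f]$.

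Alternatively, we can obtain the stronger statement $\mathbb{E}|f_n-f|\to0$ by applying Fatou to $2g-|f_n-f|\ge0$. The conclusion then follows from $|\mathbb{E}f_n-\mathbb{E}f|\le\mathbb{E}|f_n-f|$.

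The only delicate point is the cancellation step. It requires $\mathbb{E}[g]<\infty$, which is exactly the integrability hypothesis on $g$. It also requires that the $\liminf$ and $\limsup$ of $\mathbb{E}[f_n]$ are not of the form $\infty-\infty$; this is guaranteed because $|\mathbb{E}[f_n]|\le\mathbb{E}[g]$ for all $n$. We must also handle the exceptional null sets: the countably many null sets where $|f_n|\le g$ fails or where convergence fails combine into a single null set, which does not affect any expectation.
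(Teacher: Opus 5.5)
Your proof is correct: applying Fatou's lemma to the nonnegative sequences $g+f_n$ and $g-f_n$ and cancelling the finite quantity $\mathbb{E}[g]$ is the standard textbook proof, as in Durrett \cite{Durret}. The paper gives no proof of its own here; it quotes the theorem in the appendix as a standard tool from that reference, so your route is the one the paper implicitly relies on.
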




\begin{thebibliography}{99}

\bibitem{Shannon} Claude E. Shannon, { \it A Mathematical Theory of Communication} (Bell System Technical Journal, 27, 3, 379-423, 1948).

\bibitem{Cover} T. Cover, J. A. Thomas, { \it Elements of Information theory} (Wiley-Interscience, New York, 2006).

\bibitem{Nielsen} M. Nielsen, I. Chuang, { \it Quantum Computation and Quantum Information} (Cambridge University Press, Cambridge, 2000).

\bibitem{Huang} K. Huang, { \it Statistical Mechanics}  (Johs Wiley and Sons, New York, 1963).

\bibitem{Landau} L. D. Landau and E. M. Lifshitz, { \it Physique Statistique} ( MIR, Moscow, 1967).

\bibitem{Pathria} R. K. Pathria, { \it Statistical Mechanics} ( Pergamon Press, Oxford, 1972).

\bibitem{Tsallis} C. Tsallis , {\it J. Stat. Phys.} {\bf 52}, 479 (1988).

\bibitem{T1} C. Tsallis, { \it Introduction to nonextensive statistical mechanics: Approaching a Complex World} (Springer-verlag, New York, 2009).


\bibitem{Dani1} A. J. da Silva, M. A. S. Trindade, D. O. C. Santos and  R. F. Lima,  {\it Biol Cybern} {\bf 110}, 31 (2016).

\bibitem{Dani2}  D. O. C. Santos, M. A. S. Trindade, A. J. da Silva, {\it BioSystems} {\bf 232}, 105005 (2023).

\bibitem{Marco1} M. A. S. Trindade, S. Floquet, L. M. S. Filho, {\it Physica A} {\bf 541}, 12377 (2020).

\bibitem{Marco2} M. A. S. Trindade, {\it Int. J. Mod. Phys. B} {\bf 31}, 15, 1750117 (2017)

\bibitem{TG} M. Gell-Mann and C. Tsallis Eds. {\it Nonextensive Entropy-Interdisciplinary Applications} (Oxford University Press, Oxford, 2003).


\bibitem{Lind} K. Lindgren, {\it Information Theory for Complex Systems} (Springer, Berlin, 2024).

\bibitem{Golan} A. Golan and J. Harte, Information theory: a foundation for complexity science, {\it Proc. Natl. Acad. Sci. U. S. A.} {\bf 119}, 33, e2119089119 (2022).

\bibitem{Varley} T. F. Varley, Information Theory for Complex Systems Scientists: What, Why and How?, arXiv: 23041248v2 (2023).

\bibitem{Furuichi} S. Furuichi, {\it Journal of Mathematical Physics} {\bf 47}, 023302 (2006).

\bibitem{Max} J. C. Maxwell, { \it Theory of Heat} (Longmans, Green and Co., London, 1871).

\bibitem{Sz} L. Szilard, {\it Zeitscrift fur Physik} {\bf 53}, 840-856 (1926).

\bibitem{Fey} R. P. Feynman, R. B. Leighton and M. Sands { \it The Feynman Lectures on Physics} (Addison-Weslwy Reading, Mass., 1965).

\bibitem{Land} R. Landauer, {\it J. Res. Rev.} {\bf 183} (1961).


\bibitem{Ben} C. H. Bennet, {\it Int. J. Theor. Phys.} {\bf 21}, 12, 905-939 (1982).

\bibitem{Ben1} C. H. Bennet, {\it Sci. Am.} {\bf 5}, 295, 108 (1987).

\bibitem{Aquino} G. Aquino, P. Grigolini, N. Scafetta, {\it Chaos, Solitons and Fractals} {\bf 12}, 2023-2038 (2001).

\bibitem{Jay} E. T. Jaynes , {\it Physical Review. Series II} {\bf 4}, 106, 620-630 (1957).

\bibitem{Durret} R. Durret,  {\it Probability Theory and Examples} (Wadsworth and Brooks/Cole, Pacific Grove, California, 1991).

\bibitem{Doob} J. L. Doob,  {\it Measure Theory} (Springer-Verlag, New York and Berlin, 1991).

\bibitem{Spal}A. Spalviere, {\it Entropy} {\bf 23}, 899 (2021).


\end{thebibliography}
\end{document}